\documentclass[11pt]{article}
\usepackage{fullpage}
\usepackage[utf8]{inputenc} 
\usepackage[T1]{fontenc}    
\usepackage{url}            
\usepackage{booktabs}       
\usepackage{amsfonts}       
\usepackage{nicefrac}       
\usepackage{microtype}      
\usepackage{dsfont}

\usepackage{multicol}
\usepackage{ifthen}
\usepackage{subfigure}

\usepackage{amsmath,amsfonts,graphpap,amscd,mathrsfs,graphicx,lscape}
\usepackage{epsfig,amssymb,amstext,xspace}
\usepackage{float}	
\usepackage[colorlinks=true,linkcolor=red,citecolor=blue]{hyperref}

\usepackage{amsthm}

\usepackage{algorithm,algorithmic}
\newcommand{\INPUT}{\item[{\bf Input:}]}

\renewcommand{\algorithmiccomment}[1]{\bgroup\hfill\footnotesize~#1\egroup}

\newtheorem{theorem}{Theorem}
\newtheorem{lemma}[theorem]{Lemma}

\newtheorem{definition}[theorem]{Definition}

\usepackage{thm-restate}

\usepackage{color}              
\usepackage{epsfig}




\DeclareMathOperator{\poly}{poly}

\DeclareMathOperator{\E}{\mathbb{E}}
\newcommand{\EU}[1][]{\underset{#1}{\E}}


\newcommand{\given}{\, : \,}



\newcommand{\D}{\mathcal{D}}

\renewcommand{\S}{\mathcal{S}} 
\newcommand{\F}{\mathcal{F}}



\title{The Importance of Communities for  Learning to Influence}
\author{Eric Balkanski\footnote{Harvard University, \texttt{ericbalkanski@g.harvard.edu}.} 
\qquad Nicole Immorlica\footnote{Microsoft Research, \texttt{nicimm@microsoft.com}.} 
\qquad Yaron Singer\footnote{Harvard University, \texttt{yaron@seas.harvard.edu}.}}
%

\date{}

\begin{document}

\maketitle

\begin{abstract}
We consider the canonical problem of influence maximization in social networks.  Since the seminal work of Kempe, Kleinberg, and Tardos, there have been two largely disjoint efforts on this problem.  The first studies the problem associated with learning the parameters of the generative influence model.  The second focuses on the algorithmic challenge of identifying a set of influencers, assuming the parameters of the generative model are known.  Recent results on learning and optimization imply that in general, if the generative model is not known but rather learned from training data, no algorithm can yield a constant factor approximation guarantee using polynomially-many samples, drawn from any distribution.

In this paper, we design a simple heuristic that overcomes this negative result in practice by leveraging the strong community structure of social networks.  Although in general the approximation guarantee of our algorithm is necessarily unbounded, we show that this algorithm performs well experimentally.  To justify its performance, we prove our algorithm obtains a constant factor approximation guarantee on graphs generated through the stochastic block model, traditionally used to model networks with community structure.

\end{abstract}

\newpage 

\section{Introduction}


For well over a decade now, there has been extensive work on the canonical problem of influence maximization in social networks.  First posed by Domingos and Richardson~\cite{DR01,RD02} and elegantly formulated and further developed by Kempe, Kleinberg, and Tardos~\cite{KKT03}, influence maximization is the algorithmic challenge of selecting individuals who can serve as early adopters of a new idea, product, or technology in a manner that will trigger a large cascade in the social network.  

In their seminal paper, Kempe, Kleinberg, and Tardos characterize a family of natural influence processes for which selecting a set of individuals that maximize the resulting cascade reduces to maximizing a submodular function under a cardinality constraint.  Since submodular functions can be maximized within a $1-1/e$ approximation guarantee, one can then obtain desirable guarantees for the influence maximization problem.  There have since been two, largely separate, agendas of research on the problem.  The first line of work is concerned with learning the underlying submodular function from observations of cascades~\cite{LK03,AA05,LMFGH07,GBL10,CKL11,GBS11,Netrapalli12,GLK12,DSSY12,ACKP13,NLMH13,FK13,ABPNS14,CADKL14,DGSS14,DLBS14,NPS15,HO15}.  The second line of work focuses on algorithmic challenges revolving around maximizing influence, assuming the underlying function that generates the diffusion process is known~\cite{KKT05,MR07,SS13,BBCL14,HS15,HK16, angell2016don}. 

In this paper, we consider the problem of learning to influence where the goal is to maximize influence from observations of cascades. This problem synthesizes both problems of learning the function from training data and of maximizing influence given the influence function. 
A natural approach for learning to influence is to first learn the influence function from cascades, and then apply a submodular optimization algorithm on the function learned from data.  Somewhat counter-intuitively, it turns out that this approach yields desirable guarantees only under very strong learnability conditions\footnote{In general, the submodular function $f:2^N\to \mathbb{R}$ needs to be learnable \emph{everywhere} within arbitrary precision, i.e. for every set $S$ one needs to assume that the learner can produce a surrogate function $\tilde{f}:2^N\to \mathbb{R}$ s.t. for every $S\subseteq N$ the surrogate guarantees to be $(1-\epsilon)f(S)\leq \tilde{f}(S) \leq (1+\epsilon)f(S)$, for $\epsilon \in o(1)$\cite{HS16,HS17}. }.  In some cases, when there are sufficiently many samples, and one can observe exactly which node attempts to influence whom at every time step, these learnability conditions can be met.  A slight relaxation however (e.g. when there are only partial observations~\cite{NPS15,HXL16}), can lead to sharp inapproximability.

A recent line of work shows that even when a function is statistically learnable, optimizing the function learned from data can be inapproximable~\cite{BRS17,BS17}.  In particular, even when the submodular function $f:2^N\to \mathbb{R}$ is a coverage function (which is \texttt{PMAC} learnable~\cite{BDF12,FK13}), one would need to observe exponentially many samples $\{S_i,f(S_i)\}_{i=1}^{m}$ to obtain a constant factor approximation guarantee.  Since coverage functions are special cases of the well studied models of influence (independent cascade, linear and submodular threshold), this implies that when the influence function is not known but learned from data, the influence maximization problem is intractable.



\paragraph{Learning to influence social networks.}  As with all impossibility results, the inapproximability discussed above holds for worst case instances, and it may be possible that such instances are rare for influence in social networks.  In recent work, it was shown that when a submodular function has bounded curvature, there is a simple algorithm that can maximize the function under a cardinality constraint from samples~\cite{BRS16}.  Unfortunately, simple examples show that submodular functions that dictate influence processes in social networks do not have bounded curvature.  Are there other reasonable conditions on social networks that yield desirable approximation guarantees?

\paragraph{Main result.}  In this paper we present a simple algorithm for learning to influence.  This algorithm leverages the idea that social networks exhibit strong community structure.  At a high level, the algorithm observes cascades and aims to select a set of nodes that are influential, but belong to different communities.  Intuitively, when an influential node from a certain community is selected to initiate a cascade, the marginal contribution of adding another node from that same community is small, since the nodes in that community were likely already influenced.  This observation can be translated into a simple algorithm which performs very well in practice.  Analytically, since community structure is often modeled using stochastic block models, we prove that the algorithm obtains a constant factor approximation guarantee in such models, under mild assumptions.



\subsection{Technical overview}

The analysis for the  approximation guarantees lies at the intersection of  combinatorial optimization and random graph theory.  We formalize the intuition that the algorithm leverages the community structure of social networks in the standard model to analyze communities, which is the stochastic block model. Intuitively, the algorithm obtains good approximations by picking the nodes that have the largest individual influence while avoiding picking multiple nodes in the same community by pruning nodes with high influence overlap. The individual influence of nodes and their overlap are estimated by the algorithm with what we call first and second order marginal contributions of nodes, which can be estimated from samples. We then uses phase transition results of Erdős–Rényi random graphs and branching processes techniques to compare these individual influences for nodes in different communities in the stochastic block model and bound the overlap of pairs of nodes.


\paragraph{The optimization from samples model.}  Optimization from samples was recently introduced by \cite{BRS17} in the context of submodular optimization,  we give the definition for general set functions.

\begin{definition}
A class of functions $\mathcal{F} = \{ f:2^N\to \mathbb{R}\}$  is \textbf{$\alpha$-optimizable from samples over  distribution $\mathcal{D}$} under constraint  $\mathcal{M}$  if there exists an algorithm s.t. for all $f \in \F$, given a set of samples $\{(S_{i},f(S_i))\}_{i=1}^{m}$ where the sets $S_i$ are drawn i.i.d. from $\D$, the algorithm returns $S \in \mathcal{M}$ s.t.:
$$\Pr_{S_1,\dots,S_m \sim \mathcal{D}} 
\left[ \E[f(S)] \geq \alpha \cdot \max_{T\in  \mathcal{M}}f(T) \right] \geq 1 - \delta,$$
where the expectation is over the decisions of the algorithm and $m \in \poly (|N|, \nicefrac{1}{\delta})$.  
\end{definition}

We focus on bounded product distributions $\D$, so every node $a$ is, independently, in $S \sim \D$ with some probability $p_a \in [1/\poly(n), 1 - 1/ \poly(n)]$. We assume this is the case throughout the paper.

\paragraph{Influence process.}

We assume that the influence process  follows the standard \emph{independent cascade} model. In the independent cascade model, a node $a$ influences each of its neighbors $b$ with some probability $q_{ab}$, independently. Thus, given a seed set of nodes $S$, the set of nodes influenced is the number of nodes connected to some node in $S$ in the random subgraph of the network which contains every edge $ab$ independently with probability $q_{ab}$ .We define $f(S)$ to be the expected number of nodes influenced by $S$ according to the independent cascade model over some weighted social network. 

\paragraph{The learning to influence model: optimization from samples for influence maximization.} 

 The learning to influence model is an interpretation of the optimization from samples model \cite{BRS17} for the specific problem of influence maximization in social networks. We are given a collection of samples $\{(S_i, |\text{cc}(S_i)|)\}_{i=1}^m$ where sets $S_i$ are the seed sets of nodes and  $|\text{cc}(S_i)|$ is the number of nodes influenced by $S_i$, i.e., the number of nodes that are connected to $S_i$ in the random subgraph of the network. This number of nodes is a random variable with expected value $f(S_i) := \E[|\text{cc}(S_i)|]$ over the realization of the influence process. Each sample is an independent realization of the influence process. The goal is then to find a set of nodes $S$ under a cardinality constraint $k$ which maximizes the influence in expectation, i.e., find a set $S$ of size at most $k$ which maximizes the expected number of nodes $f(S)$ influenced by seed set $S$.

\section{The Algorithm}

We present the main algorithm, \textsc{COPS}. This algorithm is based on a novel optimization from samples technique which detects overlap in the marginal contributions of two different nodes, which is useful to avoid picking two nodes who have intersecting influence over a same collection of nodes.

\subsection{Description of \textsc{COPS}}
\label{s:algAlg}
\textsc{COPS}, consists of two steps. It first orders nodes in decreasing order of first order marginal contribution, which is the expected marginal contribution  of a node $a$ to a random set $S \sim \D$.  Then, it iteratively removes nodes $a$ whose marginal contribution overlaps with  the marginal contribution of at least one node before $a$ in the ordering.  The solution is the $k$ first nodes in the pruned ordering.

 \begin{algorithm}[H]
\caption{\textsc{COPS}, learns to influence networks with COmmunity Pruning from Samples.}
\begin{algorithmic}
    	\INPUT Samples $\mathcal{S} = \{(S, f(S))\}$, acceptable overlap $\alpha$.
    	\STATE Order nodes according to their first order marginal contributions
    	\STATE Iteratively remove from this ordering nodes $a$  whose marginal contribution has overlap of at least $\alpha$ with at least one node before $a$ in this ordering.
    	\RETURN $k$ first nodes in the ordering
    	
  \end{algorithmic}
  \label{alg:main}
\end{algorithm}

The strong performance of this algorithm for the problem of influence maximization is best explained with the concept of communities. Intuitively, this algorithm first orders nodes in decreasing order of their individual influence and then removes nodes which are in a same community. This second step allows the algorithm to obtain a diverse solution which influences multiple different communities of the social network. In comparison, previous algorithms in optimization from samples~\cite{BRS16,BRS17} only use first order marginal contributions and perform well if the function is close to linear. Due to the high overlap in influence between nodes in a same community, influence functions are far from being linear and these algorithms have poor performance for influence maximization since they only pick nodes from a very small number of communities.





\subsection{Computing overlap using second order marginal contributions}
\label{s:algContrib}

 We define second order marginal contributions, which are used to compute the overlap between the marginal contribution of two nodes.


\begin{definition}
The \emph{second order expected marginal contribution}
 of a node $a$ to a random set $S$ containing node $b$ is 
$$v_{b}(a) := \EU[S \sim \D : a \not \in S, b \in S][f(S \cup \{a\}) - f(S)].$$
\end{definition}

The first order marginal contribution $v(a)$ of node $a$ is defined similarly as the marginal contribution of a node $a$ to a random set $S$, i.e., $v(a)  := \E_{S \sim \D : a \not \in S}[f(S \cup \{a\}) - f(S)].$ These contributions can be estimated arbitrarily well for product distributions $\D$ by taking the difference between the average value of samples containing $a$ and $b$ and the average value of samples containing $b$ but not $a$ (see Appendix~\ref{sec:appestimates} for details). 

The subroutine \textsc{Overlap}$(a,b,\alpha)$, $\alpha \in [0,1]$, compares the second order marginal contribution of $a$ to a random set containing $b$ and the first order marginal contribution of $a$ to a random set. If  $b$ causes the marginal contribution of $a$ to decrease by at least a factor of $1 - \alpha$, then we say that $a$ has marginal contribution with overlap of at least $\alpha$ with node $b$. 


\begin{algorithm}[H]
\caption{$\textsc{Overlap}(a, b,\alpha)$, returns true if $a$ and $b$ have marginal contributions that overlap by at least a factor $\alpha$.}
\begin{algorithmic}
    	\INPUT Samples $\mathcal{S} = \{(S, f(S))\}$, node $a$, acceptable overlap $\alpha$
    	\STATE If second order marginal contribution $v_{b}(a)$ is at least a factor of $1 - \alpha$ smaller than first order marginal contribution $v(a)$,
    	\RETURN Node $a$ has overlap of at least $\alpha$ with node $b$
  \end{algorithmic}
  \label{alg:overlap}
\end{algorithm}

\textsc{Overlap} is used to detect nodes in a same community. In the extreme case where two nodes $a$ and $b$ are in a community $C$ where any node in $C$ influences all of community $C$, then the second order marginal contribution $v_{b}(a)$ of $a$ to random set $S$ containing $b$ is $v_{b}(a) = 0$ since $b$ already influences all of $C$ so $a$ does not add any value, while $v(a) \approx |C|$. In the opposite case where $a$ and $b$ are in two communities which are not connected in the network, we have $v(a) = v_{b}(a)$ since adding $b$ to a random set $S$ has no impact on the value added by $a$.

\subsection{Analyzing  community structure}
\label{s:algSBM}

The main benefit from  \textsc{COPS} is that it leverages the community structure of social networks. To formalize this explanation, we analyze our algorithm in the standard model used to study the community structure of networks, the stochastic block model. In this model, a fixed set of nodes $V$ is partitioned in communities $C_1, \ldots, C_{\ell}$. The network is then a random graph $G = (V,E)$ where edges are added to $E$ independently and  where an intra-community edge is in $E$ with much larger probability than an inter-community edge. These edges are added with identical probability $q^{\text{sb}}_{C}$ for every edge in a same community,  but with different probabilities for edges inside different communities $C_i$ and $C_j$. We illustrate this model in Figure~\ref{fig:example}.


\begin{figure}
\centering
\includegraphics[scale=0.3]{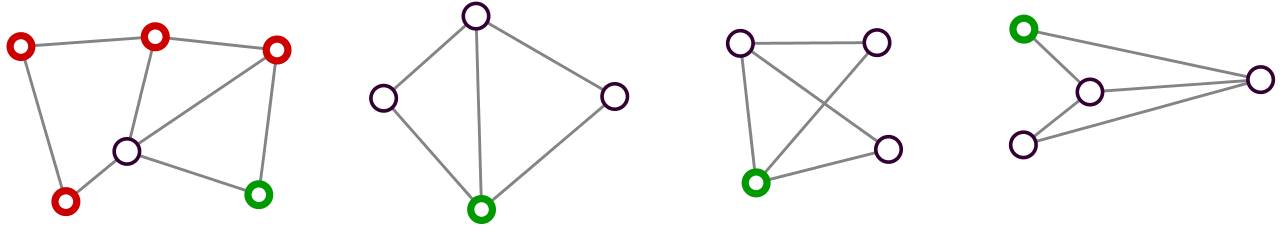}
\caption{An illustration of the stochastic block model with communities $C_1$, $C_2$, $C_3$ and $C_4$ of sizes $6,4,4$ and $4$.  The optimal solution for influence maximization  with $k = 4$ is in green. Picking the $k$ first nodes in the ordering by marginal contributions without pruning, as in \cite{BRS16}, leads to a solution with nodes from only $C_1$ (red). By removing nodes with overlapping marginal contributions, \textsc{COPS} obtains a diverse solution.}
\label{fig:example}
\end{figure}

\section{Dense Communities and Small Seed Set in the Stochastic Block Model}
 \label{sec:ptas}

In this section, we show that \textsc{COPS} achieves a $1 - O(|C_k|^{-1})$ approximation, where $C_k$ is the $k$th largest community, in the regime with dense communities and small seed set, which is  described below.  We show that the algorithm picks a node from each of the $k$ largest communities with high probability, which is the optimal solution. In the next section, we show a constant factor approximation algorithm for a generalization of this setting, which requires a more intricate analysis.

 In order  to focus on the main characteristics of the community structure as an explanation for the performance of the algorithm, we make the following simplifying assumptions for the analysis.  We first assume that there are no inter-community edges.\footnote{The analysis easily extends to cases where inter-community edges form  with probability significantly smaller to $q^{\text{sb}}_{C}$, for all $C$.} We also assume that the random graph obtained from the stochastic block model is redrawn for every sample and that we aim to find a good solution in expectation over both the stochastic block model and the independent cascade model.

 Formally, let $G = (V,E)$ be the random graph over $n$ nodes obtained from an independent cascade process over the graph generated by the stochastic block model. Similarly as for the stochastic block model,  edge probabilities for the independent cascade model may vary between different communities and are identical within a single community $C$, where all edges have weights $q^{\text{ic}}_{C}$. Thus,  an edge $e$ between two nodes in a community $C$ is in $E$ with probability $p_C := q^{\text{ic}}_{C} \cdot  q^{\text{sb}}_{C}$, independently for every edge, where $q^{\text{ic}}_{C}$ and $q^{\text{sb}}_{C}$ are the edge probabilities in the independent cascade model and the stochastic block model respectively. The total influence by seed set $S$ is then $|\text{cc}_{G}(S_i)|$ where $\text{cc}_{G}(S)$ is the set of nodes connected to $S$ in $G$ and we drop the subscript when it is clear from context. Thus, the objective function is $f(S) : = \E_{G}[|\text{cc}(S)|]$. We describe the two assumptions for  this section.


\paragraph{Dense communities.} We assume that for the $k$ largest communities $C$, $p_C > 3 \log |C| / |C|$ and $C$ has super-constant size ($|C| = \omega(1)$). This assumption corresponds to  communities where the probability $p_C $ that a node $a_i \in C$ influences another node $a_j \in C$ is large. Since the subgraph $G[C]$ of $G$ induced by a community $C$ is an Erdős–Rényi random graph, we get that $G[C]$ is connected with high probability (see Appendix~\ref{sec:algER}).

\begin{restatable}{rLem}{lemERC}\cite{erdos1960evolution}
\label{lem:ER1}
Assume $C$ is a ``dense" community, then the subgraph $G[C]$ of $G$ is connected with probability $1 - O(|C|^{-2})$.
\end{restatable}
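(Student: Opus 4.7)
The plan is to recognize $G[C]$ as an Erdős–Rényi random graph and then apply the classical union-bound argument for connectivity above the $(\log n)/n$ threshold, carefully tracking constants so that the assumed $c = 3$ yields the desired $O(|C|^{-2})$ failure rate. Because there are no inter-community edges (by the standing assumption of this section) and every intra-community edge in $C$ appears independently with probability $p_C = q^{\text{ic}}_C \cdot q^{\text{sb}}_C$, the induced subgraph $G[C]$ is distributed exactly as $G(|C|, p_C)$, so the question reduces to a pure Erdős–Rényi connectivity bound.

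First I would use the standard cut characterization: $G[C]$ is disconnected iff some nonempty $S \subsetneq C$ of size $s \leq |C|/2$ has no crossing edge. A union bound then gives
\[
\Pr[G[C] \text{ disconnected}] \;\leq\; \sum_{s=1}^{\lfloor |C|/2\rfloor} \binom{|C|}{s}(1-p_C)^{s(|C|-s)} \;\leq\; \sum_{s=1}^{\lfloor |C|/2\rfloor}\left(\frac{e|C|}{s}\right)^{\!s}\exp\bigl(-p_C\, s(|C|-s)\bigr).
\]
Plugging in $p_C \geq 3\log|C|/|C|$ and using $|C|-s \geq |C|/2$ in the range of interest, each summand is at most $|C|^{s(3s-2|C|)/|C|}$ up to the $(e|C|/s)^s$ prefactor. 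As a function of $s$, this exponent is $\approx -2s$ for small $s$, attains its interior minimum $\approx -|C|/3$ at $s = |C|/3$, and is $\approx -|C|/4$ at $s = |C|/2$, so the sum is dominated by its leftmost terms.

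The only delicate point is to squeeze exactly $O(|C|^{-2})$ out of $s = 1$. Here I would compute directly: the $s = 1$ term (essentially the expected number of isolated vertices) is at most $|C|(1-p_C)^{|C|-1} \leq |C|\exp(-p_C(|C|-1)) \leq |C|\cdot |C|^{-3(|C|-1)/|C|} = |C|^{-2+3/|C|}$, and the assumption $|C| = \omega(1)$ forces $|C|^{3/|C|} \to 1$, giving $O(|C|^{-2})$. The analogous computation for $s = 2$ gives $O(|C|^{-4})$, and super-geometric decay for larger $s$ follows from the exponent analysis above, so the full sum is $O(|C|^{-2})$. No genuine obstacle arises in this proof; the only subtlety is faithfully tracking the constant $3$ in the exponent $|C|^{-3(|C|-1)/|C|}$ when handling the isolated-vertex term, since any constant significantly smaller than $3$ would degrade the rate below $|C|^{-2}$ and hence fail to match the statement.
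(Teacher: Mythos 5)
Your proof is correct and follows essentially the same route as the paper's: both are first-moment arguments in which the expected number of isolated vertices ($s=1$ term) dominates and the constant $3$ in the density assumption delivers the exponent $|C|^{1-3}=|C|^{-2}$, with larger components/cuts contributing only lower-order terms. The only difference is that the paper cites textbook bounds on the expected number of isolated vertices and of components of size $2$ to $|C|/2$, whereas you carry out the union bound over crossing-edge-free subsets explicitly, which is self-contained but equivalent.
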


\paragraph{Small seed set.} We also assume that the seed sets $S \sim \D$ are small enough so that they rarely intersect with a fixed community $C$, i.e., $\Pr_{S \sim \D}[S \cap C = \emptyset] \geq 1 - o(1)$. This assumption corresponds to cases where the set of early influencers is small, which is usually the case in cascades.


The analysis in this section relies on two main lemmas. We first show that the first order marginal contribution of a node is approximately the size of the community it belongs to (Lemma~\ref{lem:margcontrib}). Thus, the ordering by marginal contributions orders elements by the size of the community they belong to. Then, we show that any node $a \in C$ that is s.t. that there is a node $b \in C$ before $a$ in the ordering is pruned (Lemma~\ref{lem:overlap}). Regarding the distribution $S \sim \D$ generating the samples, as previously mentioned, we consider any bounded product distribution. This implies that w.p. $1- 1/ \poly(n)$, the algorithm can compute marginal contribution estimates $\tilde{v}$ that are all a $1/\poly(n)$-additive approximation to the true marginal contributions $v$  (See Appendix~\ref{sec:appestimates} for formal analysis of estimates). Thus, we give the analysis for the true marginal contributions, which, with probability $1- 1/\poly(n)$ over the samples, easily extends for arbitrarily good estimates.

The following lemma shows that the ordering by first order marginal contributions corresponds to the ordering by decreasing order of community sizes that nodes belong to.
\begin{lemma}
\label{lem:margcontrib} For all $a \in C$ where $C$ is one of the $k$ largest communities, the first order marginal contribution of node $a$ is approximately the size of its community, i.e., $(1-o(1))|C| \leq v(a) \leq |C|.$
\end{lemma}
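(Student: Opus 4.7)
The plan is to bound $v(a)$ above and below separately, using two facts: no inter-community edges means the influence from $a$ is confined to its own community $C$, and the combination of the small-seed-set assumption with Lemma~\ref{lem:ER1} forces $a$ to reach essentially all of $C$.

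For the upper bound $v(a) \leq |C|$, I would argue as follows. The marginal contribution $f(S \cup \{a\}) - f(S)$ equals the expected number of nodes that belong to $\text{cc}_G(S \cup \{a\})$ but not to $\text{cc}_G(S)$. Since there are no inter-community edges, any node reachable from $a$ in $G$ lies in $C$, so the set of new nodes is contained in $C$; this gives $f(S \cup \{a\}) - f(S) \leq |C|$ pointwise in $S$ and in the realization of $G$, hence $v(a) \leq |C|$.

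For the lower bound, I would condition on the (likely) event that $S$ misses $C$. By the small-seed-set assumption, $\Pr_{S \sim \D}[S \cap C = \emptyset] \geq 1 - o(1)$. Conditioning on $S \cap C = \emptyset$, the absence of inter-community edges implies $\text{cc}_G(S) \cap C = \emptyset$, so the number of new nodes contributed by $a$ is exactly $|\text{cc}_{G[C]}(\{a\})|$, the size of the connected component of $a$ in the induced subgraph $G[C]$. Now I invoke Lemma~\ref{lem:ER1}: since $C$ is one of the $k$ largest (dense) communities, $G[C]$ is connected with probability $1 - O(|C|^{-2})$, in which case $|\text{cc}_{G[C]}(\{a\})| = |C|$; in the complementary event the component still has size at least $1$. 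Taking the expectation over $G$ therefore gives
\[
\E_G\bigl[|\text{cc}_{G[C]}(\{a\})|\bigr] \;\geq\; |C|\cdot\bigl(1 - O(|C|^{-2})\bigr) \;=\; |C| - O(|C|^{-1}).
\]
Combining with the probability at least $1-o(1)$ of the conditioning event (and using nonnegativity of the marginal contribution on the complement) yields
\[
v(a) \;\geq\; (1 - o(1))\bigl(|C| - O(|C|^{-1})\bigr) \;=\; (1-o(1))|C|,
\]
where the last equality uses $|C| = \omega(1)$ so that $O(|C|^{-1})$ is absorbed into the $o(1)|C|$ term.

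There isn't really a single hard step; the main thing to be careful about is keeping the three sources of randomness (the sample $S \sim \D$, the stochastic block graph, and the independent-cascade edge realizations) disentangled. Since $G$ is defined as the random graph with each intra-community edge present independently with probability $p_C = q^{\text{ic}}_C \cdot q^{\text{sb}}_C$ and redrawn per sample, the Erdős–Rényi result applies directly to $G[C]$, and the conditioning on $S \cap C = \emptyset$ is clean because $S$ and $G$ are independent.
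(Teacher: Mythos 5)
Your proof is correct and follows essentially the same route as the paper: the upper bound from the absence of inter-community edges, and the lower bound by restricting to the event that $S$ misses $C$ (small-seed-set assumption) and that $G[C]$ is connected (Lemma~\ref{lem:ER1}, with $|C| = \omega(1)$ absorbing the error terms). The only difference is cosmetic — you take the expectation of $|\text{cc}_{G[C]}(\{a\})|$ over $G$ where the paper multiplies directly by $\Pr_G[\text{cc}(a) = C]\cdot|C|$ — which amounts to the same bound.
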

\begin{proof} 
Assume $a$ is a node in one of the $k$ largest communities.  Let $\D_{a}$ and $\D_{-a}$ denote the distributions $S \sim \D$ conditioned on $a \in S$ and $a \not \in S$ respectively. We also denote marginal contributions by $f_S(a) := f(S \cup\{a\}) -f(S)$. We obtain 

\begin{align*}
v(a)  = \EU[S \sim \D_{-a}, G][f_{S}(a)]  
 &\geq \Pr_{S \sim \D_{-a}}[S \cap C = \emptyset] \cdot  \Pr_G[\text{cc}(a) = C] \cdot \EU[\substack{S \sim \D_{-a} \given S \cap C = \emptyset, \\ G \given \text{cc}(a) = C}][f_{S}(a)]  \\
   & =  \Pr_{S \sim \D_{-a}}[S \cap C = \emptyset] \cdot  \Pr_G[\text{cc}(a) = C] \cdot |C| \\
 & \geq (1-o(1)) \cdot |C|
 \end{align*}
  where the  last inequality is  by the small seed set assumption and  since $C$ is connected with probability $1 - o(1)$ (Lemma~\ref{lem:ER1} and $|C| = \omega(1)$ by dense community assumption). For the upper bound, $v(a)$ is trivially at most the size of $a$'s community since there are no inter-community edges. 
\end{proof}

The next lemma shows that the algorithm does not pick two nodes in a same community.

\begin{lemma}
\label{lem:overlap}
With probability $1-o(1)$, for all pairs of nodes $a,b$ such that $a,b \in C$ where $C$ is one of the $k$ largest communities, $\textsc{Overlap}(a,b,\alpha) = \text{True}$ for any constant $\alpha \in [0,1)$.
\end{lemma}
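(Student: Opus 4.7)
The plan is to show that conditioning the expectation defining $v_b(a)$ on the event that the induced subgraph $G[C]$ is connected forces the marginal contribution to collapse to zero: since there are no inter-community edges, $\text{cc}_G(a) \subseteq C$, and once $G[C]$ is connected and $b \in S$, we have $C = \text{cc}_G(b) \subseteq \text{cc}_G(S)$, so $a$ adds no new reachable nodes. Combined with the polynomially small probability that $G[C]$ is disconnected (Lemma~\ref{lem:ER1}) and the lower bound $v(a) = \Theta(|C|)$ from Lemma~\ref{lem:margcontrib}, this will make $v_b(a)/v(a) = o(1)$, far below $(1-\alpha)$ for any constant $\alpha < 1$.

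First, I would decompose the expectation. Let $\mathcal{E}$ be the event that $G[C]$ is connected. Since $G$ and $S$ are drawn independently and $\mathcal{E}$ depends only on $G$,
\begin{align*}
v_b(a) &= \EU[S \sim \D : a \notin S, b \in S][f(S \cup \{a\}) - f(S)] \\
       &= \Pr[\mathcal{E}] \cdot \EU[\cdot \mid \mathcal{E}][f(S \cup \{a\}) - f(S)] + \Pr[\overline{\mathcal{E}}] \cdot \EU[\cdot \mid \overline{\mathcal{E}}][f(S \cup \{a\}) - f(S)].
\end{align*}
Under $\mathcal{E}$ and $b \in S$, the first term vanishes by the reachability argument above. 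Under $\overline{\mathcal{E}}$, I bound the marginal contribution trivially by $|\text{cc}_G(a)| \leq |C|$, since no inter-community edges exist. By Lemma~\ref{lem:ER1}, $\Pr[\overline{\mathcal{E}}] = O(|C|^{-2})$, so $v_b(a) \leq O(|C|^{-2}) \cdot |C| = O(|C|^{-1})$.

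Next, I would combine with Lemma~\ref{lem:margcontrib}, which gives $v(a) \geq (1 - o(1))|C|$, to obtain
\[
\frac{v_b(a)}{v(a)} \leq \frac{O(|C|^{-1})}{(1 - o(1))|C|} = O(|C|^{-2}) = o(1),
\]
where the last equality uses $|C| = \omega(1)$ from the dense community assumption. In particular, for any constant $\alpha \in [0, 1)$, eventually $v_b(a) \leq (1-\alpha) v(a)$, so \textsc{Overlap}$(a,b,\alpha)$ returns \textsc{True} on the true marginal contributions. The ``with probability $1-o(1)$'' qualifier is then absorbed into the sample estimation error: by Appendix~\ref{sec:appestimates}, all estimates $\tilde v, \tilde v_b$ are within $1/\poly(n)$ of the true values with probability $1 - 1/\poly(n)$, which preserves the (asymptotic) gap for all pairs simultaneously after a union bound over the at most $n^2$ pairs.

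The main obstacle is conceptually minor but needs care: justifying that the crude bound $f(S \cup \{a\}) - f(S) \leq |C|$ on the disconnected event is tight enough. It is, precisely because Lemma~\ref{lem:ER1} gives a quadratic disconnection probability $O(|C|^{-2})$, which beats the linear loss $|C|$ in the trivial bound. A secondary subtlety is the independence of $G$ and $S$, which is what allows me to factor $\Pr[\mathcal{E}]$ cleanly out of the expectation; this follows from the model where the graph is redrawn for every sample.
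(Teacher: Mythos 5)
Your proof is correct and follows essentially the same route as the paper: the paper conditions on the event that $b$ and $a$ are connected in $G$ (which it bounds via the connectivity of $G[C]$ from Lemma~\ref{lem:ER1}), giving the same $\Pr[\text{disconnected}]\cdot|C| = O(|C|^{-1})$ bound on $v_b(a)$ and the same comparison against $v(a) = \Theta(|C|)$. Your treatment of the estimation error and the union bound over pairs is slightly more explicit than the paper's, but the argument is the same.
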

\begin{proof}
Let $a,b$ be two nodes in one of the $k$ largest communities $C$ and $\D_{-a,b}$ denote the distribution $S \sim \D$ conditioned on $a \not \in S$ and $b \in S$. Then,
\begin{align*}
v_{b}(a)  = \EU[S \sim \D_{-a,b}][f_{S}(a)]  \leq \Pr[b  \in \text{cc}(a)] \cdot 0 +  \Pr[b \not \in \text{cc}(a)] \cdot |C|  = o(1) \leq o(1)\cdot v(a)
\end{align*}
where the last equality is since $G[C]$ is not connected w.p. $O(|C|^{-2})$  by Lemma~\ref{lem:ER1} and since $|C| = \omega(1)$ by the dense community assumption, which concludes the proof.
\end{proof}

By combining Lemmas~\ref{lem:margcontrib} and \ref{lem:overlap}, we obtain the main result for this section (proof in Appendix~\ref{sec:appptas}).
\begin{restatable}{rThm}{thmptas}
\label{thm:ptas}
In the dense communities and small seed set setting, \textsc{COPS} with $\alpha$-overlap allowed, for any constant $\alpha \in (0,1)$ is a $1-o(1)$-approximation algorithm for learning to influence from samples from a bounded product distribution $\D$.
\end{restatable}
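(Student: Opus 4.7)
}

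The plan is to show that, with probability $1-o(1)$ over the samples and the realized graphs, \textsc{COPS} returns a set containing exactly one node from each of the $k$ largest communities $C_1,\dots,C_k$, and that such a set is essentially optimal. Combining this with the bounded-product-distribution guarantee that all relevant marginal contributions are estimated to within $1/\poly(n)$ additive error (Appendix~\ref{sec:appestimates}) will yield the claimed $1-o(1)$ approximation.

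First, I would characterize the optimal solution. Since there are no inter-community edges and, by Lemma~\ref{lem:ER1}, each of the $k$ largest communities is connected with probability $1-O(|C|^{-2})$, for any seed set $T$ of size at most $k$ we have $f(T) \leq \sum_{C \,:\, C \cap T \neq \emptyset} |C|$. Maximizing this subject to $|T|\le k$ is achieved by placing one seed in each of the $k$ largest communities, yielding $\max_{|T|\le k} f(T) \leq \sum_{i=1}^{k} |C_i|$, and (by the same connectivity argument used for the lower bound in Lemma~\ref{lem:margcontrib}) this value is achieved up to a $(1-o(1))$ factor by any such diverse set.

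Next, I would argue that \textsc{COPS} returns a diverse set. By Lemma~\ref{lem:margcontrib}, for every node $a$ in one of the $k$ largest communities $C$, the true first-order marginal contribution satisfies $(1-o(1))|C| \leq v(a) \leq |C|$; moreover, an analogous (and simpler) upper bound $v(a)\le |C|$ trivially holds for nodes in smaller communities. Hence, after replacing $v$ by estimates $\tilde v$ with additive error $1/\poly(n)$, the induced ordering lists all nodes of $C_1$ before those of $C_2$, then $C_2$ before $C_3$, and so on for the top $k$ communities (using that $|C_i|=\omega(1)$ to absorb both the $o(1)$ and the $1/\poly(n)$ slack). By Lemma~\ref{lem:overlap}, for any constant $\alpha\in(0,1)$ and any two nodes $a,b$ in a common top-$k$ community, $\textsc{Overlap}(a,b,\alpha)$ returns \textsf{True} w.h.p., so after pruning exactly one representative of each top-$k$ community survives in the prefix of length $k$, giving \textsc{COPS} a set $S$ with exactly one node in each of $C_1,\ldots,C_k$.

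Finally, I would compute $\E[f(S)]$ for this $S$. Conditioning on each $G[C_i]$ being connected (probability $1-O(|C_i|^{-2})$ per Lemma~\ref{lem:ER1}) and on the seed not overlapping $C_i$ (trivially true since $|S\cap C_i|=1$ is exactly the seed we want), the seed in $C_i$ influences all of $C_i$, giving
\[
\E[f(S)] \;\geq\; \sum_{i=1}^{k} (1-O(|C_i|^{-2}))\,|C_i| \;\geq\; (1-o(1))\sum_{i=1}^{k}|C_i| \;\geq\; (1-o(1))\max_{|T|\le k} f(T),
\]
where the last inequality uses the upper bound on OPT from the first step. A union bound over the $O(k^2)$ applications of Lemmas~\ref{lem:ER1} and~\ref{lem:overlap}, together with the high-probability estimation guarantee, absorbs all failure events into the $1-o(1)$ factor.

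The only delicate step is making the ordering argument robust: one must verify that the $o(1)$ slack in Lemma~\ref{lem:margcontrib} combined with the $1/\poly(n)$ sample-estimation error is still strictly smaller than the gaps $|C_i|-|C_{i+1}|$ that matter (and, more importantly, that even if two adjacent communities have nearly equal sizes the \emph{pruning} step still removes duplicates inside each community, so the prefix of length $k$ remains diverse). This is where the $|C|=\omega(1)$ hypothesis and the constancy of $\alpha$ in Lemma~\ref{lem:overlap} are used most carefully; everything else is bookkeeping.
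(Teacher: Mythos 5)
Your overall strategy --- use Lemma~\ref{lem:margcontrib} to show the ordering tracks community size, Lemma~\ref{lem:overlap} to show within-community duplicates are pruned, and connectivity (Lemma~\ref{lem:ER1}) to convert ``one seed per top-$k$ community'' into value $(1-o(1))\sum_{i\le k}|C_i|$ --- is exactly the paper's argument, and your treatment of OPT and of the estimation error matches the paper's. There is, however, one step you assert but never justify, and it is precisely the half of the pruning analysis that the paper devotes the first part of its proof to: you need that for $a\in C$ and $b\notin C$, $\textsc{Overlap}(a,b,\alpha)$ returns \emph{False}, i.e., that the first node of each top-$k$ community in the ordering actually \emph{survives} the pruning. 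Lemma~\ref{lem:overlap} only gives the ``at most one survivor per community'' direction; without the converse, the pruning could in principle delete the leading representative of $C_2$ because of spurious overlap with a node of $C_1$, and then the length-$k$ prefix would not contain one node from each large community, which is the whole point. The paper closes this by observing that, since there are no inter-community edges and $\D$ is a product distribution, for $b\notin C$ one has $v_b(a)=\E_{S\sim\D_{-a,b}}[f_S(a)]=\E_{S\sim\D_{-a}}[f_S(a)]=v(a)$, so $v_b(a)>(1-\alpha)v(a)$ for any constant $\alpha>0$, and this survives the $1/\poly(n)$ estimation error because $v(a)=\omega(1)$. With that one line added, your argument is complete and coincides with the paper's.
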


\section{Constant Approximation for General Stochastic Block Model}
\label{sec:general}

In this section, we relax assumptions from the previous section and show that \textsc{COPS} is a constant factor approximation algorithm in this more demanding setting. Recall that $G$ is the random graph obtained from both the stochastic block model and the independent cascade model. A main observation that is used  in the analysis is to observe that the random subgraph $G[C]$, for some community $C$, is an Erdős–Rényi random graph $G_{|C|,p_C}$.


\paragraph{Relaxation of the assumptions.} Instead of only considering dense communities where $p_C = \Omega((\log |C|)/|C|)$, we consider both \emph{tight} communities $C$ where $p_C \geq (1 + \epsilon) / |C|$ for some constant $\epsilon > 0$  and \emph{loose} communities $C$ where $p_C \leq(1-\epsilon) / |C|$ for some constant $\epsilon > 0$.\footnote{Thus, we consider all possible sizes of communities except communities of size that converges to \emph{exactly} $1 / p_C$, which is unlikely to occur in practice.} We also relax the small seed set assumption to the reasonable \emph{non-ubiquitous} seed set assumption. Instead of having a seed set  $S \sim \D$ rarely intersect with a fixed community $C$, we only assume that $\Pr_{S \sim \D}[S \cap C =\emptyset] \geq \epsilon$ for some constant $\epsilon > 0$. Again, since seed sets are of small sizes in practice, it seems reasonable that with some constant probability a community does not contain any seeds.

\paragraph{Overview of analysis.} At a high level, the analysis exploits the remarkably sharp threshold for the phase transition of Erdős–Rényi random graphs. This phase transition (Lemma~\ref{lem:giant}) tells us that a tight community $C$ contains w.h.p. a giant connected component with a constant fraction of the nodes from $C$. Thus, a single node from a tight community influences a constant fraction of its community in expectation. The ordering by first order marginal contributions thus ensures a constant factor approximation of the value from nodes in tight communities (Lemma~\ref{lem:tight}). On the other hand, we show that a node from a loose community influences only at most a constant number of nodes in expectation (Lemma~\ref{lem:loose}) by using  branching processes. Since the algorithm checks for overlap using second order marginal contributions, the algorithm picks at most one node from any tight community (Lemma~\ref{lem:overlaptight}). Combining all the pieces together, we obtain a constant factor approximation (Theorem~\ref{thm:main}). 

We first state the result for the giant connected component in a tight community, which is an immediate corollary of the prominent  giant connected component result in the Erdős–Rényi model.

\begin{restatable}{rLem}{lemgiant}\cite{erdos1960evolution}
\label{lem:giant}
Let $C$ be a tight community with $|C| = \omega(1)$, then $G[C]$ has a ``giant" connected component containing a constant fraction of the nodes in $C$ w.p. $1- o(1)$.
\end{restatable}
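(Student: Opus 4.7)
The plan is to reduce the statement to the classical Erdős–Rényi phase-transition theorem, since by construction $G[C]$ is distributed exactly as $G(|C|, p_C)$: each intra-community edge is present independently with probability $p_C = q^{\text{ic}}_C \cdot q^{\text{sb}}_C$, and the assumption of no inter-community edges plays no role here because we only examine the induced subgraph on $C$. The tight community condition $p_C \geq (1+\epsilon)/|C|$ is exactly the supercritical regime, so the entire content of the lemma is a translation of a standard random graph result to the notation of this paper.

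Concretely, I would first recall the Erdős–Rényi giant component theorem in the form I need: for $n \to \infty$ and $p = c/n$ with $c > 1$ a constant, the largest connected component of $G(n, p)$ has size $(\beta(c) + o(1))\, n$ with probability $1 - o(1)$, where $\beta(c) \in (0,1)$ is the unique positive solution of $1 - \beta = e^{-c\beta}$. Next, I would use the natural monotone coupling in which every edge of $G(|C|, (1+\epsilon)/|C|)$ is also present in $G[C]$ (possible since $p_C \geq (1+\epsilon)/|C|$). Component sizes are monotone in edges, so the largest component of $G[C]$ stochastically dominates that of $G(|C|, (1+\epsilon)/|C|)$. Applying the classical theorem to this smaller graph with constant $c = 1+\epsilon > 1$ yields a component of size at least $(\beta(1+\epsilon) + o(1))\, |C|$ with probability $1-o(1)$.

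Finally, I would set $\beta := \beta(1+\epsilon)/2$, which is a positive constant depending only on $\epsilon$, and conclude that $G[C]$ contains a connected component of size at least $\beta \, |C|$ with probability $1-o(1)$. The only care needed is that the $o(1)$ error in the Erdős–Rényi theorem is in the parameter $n$, so we must have $|C| \to \infty$; this is exactly the hypothesis $|C| = \omega(1)$.

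The main (and only) obstacle is really bookkeeping: making sure the asymptotic notation translates properly from the standard $n \to \infty$ formulation to our parameter $|C|$, and explicitly invoking the monotone coupling so that we do not have to assume $p_C$ equals $(1+\epsilon)/|C|$ but may be larger. Since the lemma is explicitly labelled an immediate corollary of \cite{erdos1960evolution}, no further random graph machinery is required.
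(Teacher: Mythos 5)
Your proposal is correct and follows exactly the route the paper intends: the paper offers no written proof at all, simply restating the lemma in the appendix and citing the classical Erd\H{o}s--R\'enyi giant component theorem, so your reduction via the observation that $G[C] \sim G(|C|, p_C)$ with $p_C \geq (1+\epsilon)/|C|$ supercritical is precisely the intended argument. Your explicit monotone coupling to handle $p_C$ strictly larger than $(1+\epsilon)/|C|$ and your note that the $o(1)$ requires $|C| = \omega(1)$ are careful details the paper leaves implicit.
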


The following lemma analyzes the influence of a node in a loose community through the lenses of Galton-Watson branching processes to show that such a node influences at most a constant number of nodes in expectation. The proof is deferred to Appendix~\ref{sec:appgeneral}.

\begin{restatable}{rLem}{lemloose}
\label{lem:loose}
Let $C$ be a loose community, then  $f(\{a\}) \leq c$ for all $a \in C$ and some constant $c$.
\end{restatable}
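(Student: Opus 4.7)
The plan is to reduce the claim to a standard subcritical branching process estimate. Since the analysis assumes no inter-community edges, for $a \in C$ the influenced set $\text{cc}(a)$ lies entirely inside $C$, so $f(\{a\}) = \E_G[|\text{cc}_{G[C]}(a)|]$, and $G[C]$ is exactly the Erdős–Rényi random graph $G_{|C|,p_C}$ with $p_C \le (1-\epsilon)/|C|$. The case $|C| = O(1)$ is immediate since $|\text{cc}(a)| \le |C|$, so I only need to treat $|C|$ large.

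For large $|C|$, I would perform the usual BFS exploration of $\text{cc}_{G[C]}(a)$ and couple it with a Galton--Watson branching process whose offspring distribution is $\mathrm{Binomial}(|C|-1, p_C)$. The coupling is the standard one: at each step the BFS reveals at most $|C|-1$ potential neighbors, each edge present independently with probability $p_C$, and ignoring the "already explored" restriction only inflates the progeny. Hence $|\text{cc}_{G[C]}(a)|$ is stochastically dominated by the total progeny $T$ of this branching process.

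The offspring mean is $\mu := (|C|-1)\,p_C \le (1-\epsilon)(|C|-1)/|C| \le 1-\epsilon < 1$, so the branching process is subcritical. A standard calculation (summing the expected population of each generation, $\E[\text{gen}_n] = \mu^n$) gives
\[
\E[T] \;=\; \sum_{n=0}^{\infty} \mu^n \;=\; \frac{1}{1-\mu} \;\le\; \frac{1}{\epsilon}.
\]
Therefore $f(\{a\}) \le \E[T] \le 1/\epsilon$, a constant depending only on $\epsilon$, proving the lemma with $c = 1/\epsilon$.

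The only subtle step is the stochastic domination by the Binomial branching process; this is standard but one has to be careful that, when the BFS queue is processed, the $|C|-1$ upper bound on potential children is uniform across the exploration (independent of history), which is why the coupling goes through. Everything else — the subcritical phase bound and the geometric-series computation of expected total progeny — is routine.
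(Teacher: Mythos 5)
Your proof is correct and follows essentially the same route as the paper's: couple the BFS exploration of $\text{cc}_{G[C]}(a)$ with a subcritical Galton--Watson process having $\mathrm{Binomial}(|C|-1,p_C)$ offspring, and bound the expected total progeny by $\sum_n \mu^n = 1/(1-\mu) \le 1/\epsilon$. Your treatment is in fact slightly more careful than the paper's on the stochastic-domination step and the small-$|C|$ case.
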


We can now upper bound  the value of the optimal solution $S^{\star}$. Let $C_1, \ldots, C_t$ be the $t \leq k$ tight communities that have at least one node in $C_i$ that is in the optimal solution $S^{\star}$ and that are of super-constant size, i.e., $|C| = \omega(1)$. Without loss, we order these communities in decreasing order of their size $|C_i|$.

\begin{lemma}
\label{lem:ub} Let $S^{\star}$ be the optimal set of nodes and $C_i$ and $t$ be defined as above.
There exists a constant $c$ such that $f(S^{\star}) \leq \sum_{i=1}^{t} |C_i| + c\cdot k.$
\end{lemma}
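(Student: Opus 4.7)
The plan is to decompose $f(S^{\star})$ community-by-community, using the simplifying assumption that there are no inter-community edges. Because of this assumption, in every realization $G$ of the random graph, any node influenced by $S^{\star}$ lies in the same community as one of the seeds. Hence if we let $f_{C}(T) := \E_{G}[|\text{cc}_{G[C]}(T)|]$ denote the expected influence restricted to community $C$, we have the clean decomposition
\[
f(S^{\star}) \;=\; \sum_{C} f_{C}(S^{\star} \cap C),
\]
where the sum ranges over all communities that intersect $S^{\star}$. Note $f_C$ is monotone submodular (it is the influence function of the independent cascade on $G[C]$), and $f_C(\{a\}) = f(\{a\})$ for $a \in C$ since no edges leave $C$.

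Next, I would partition the communities touched by $S^{\star}$ into three groups: (a) the super-constant tight communities $C_1,\ldots,C_t$ isolated in the lemma; (b) the loose communities touched by $S^{\star}$; and (c) the tight communities touched by $S^{\star}$ whose size is $O(1)$. For group (a) I would just use the trivial bound $f_{C_i}(S^{\star} \cap C_i) \leq |C_i|$ (every node in $C_i$ is influenced in at most one way), which yields a total contribution of at most $\sum_{i=1}^{t} |C_i|$, matching the first term of the claimed bound.

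For groups (b) and (c), the goal is to show each contributes only a constant per seed placed in it, so that their total contribution is $O(k)$. For a loose community $C$, Lemma~\ref{lem:loose} gives $f(\{a\}) \leq c_1$ for every $a \in C$, and by submodularity/subadditivity of $f_C$ applied to $S^{\star} \cap C$,
\[
f_C(S^{\star} \cap C) \;\leq\; \sum_{a \in S^{\star} \cap C} f_C(\{a\}) \;\leq\; c_1 \cdot |S^{\star} \cap C|.
\]
For a constant-size tight community $C$, the bound is even easier: $f_C(S^{\star} \cap C) \leq |C| \leq c_2 \leq c_2 \cdot |S^{\star} \cap C|$. Setting $c := \max(c_1, c_2)$ and summing over all communities in (b) and (c), the total is at most $c \sum_{C} |S^{\star} \cap C| \leq c \cdot |S^{\star}| \leq c \cdot k$. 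Adding this to the group (a) bound produces $f(S^{\star}) \leq \sum_{i=1}^{t} |C_i| + c \cdot k$, as required.

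There is no genuinely hard step here once the community decomposition is in place; the main subtlety is classifying the communities correctly (super-constant tight, loose, constant-size tight) so that each node in $S^{\star}$ either lives in one of the $C_i$ (contributing to the first term) or contributes only $O(1)$ expected influence (absorbed into the second term). The inter-community-edge-free assumption, Lemma~\ref{lem:loose}, and submodularity of $f_C$ are exactly the tools needed.
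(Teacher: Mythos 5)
Your proposal is correct and follows essentially the same route as the paper: partition the optimal seeds by the type of community they lie in, bound the super-constant tight communities trivially by $\sum_{i=1}^{t}|C_i|$, and charge every remaining seed a constant via subadditivity together with Lemma~\ref{lem:loose} (for loose communities) and the trivial size bound (for constant-size tight communities). Your exact community-by-community decomposition under the no-inter-community-edges assumption is just a slightly more explicit form of the paper's two-part subadditive split into $S^{\star}_{A}$ and $S^{\star}_{B}$.
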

\begin{proof}
Let $S^{\star}_{A}$ and $S^{\star}_{{B}}$ be a partition of the optimal nodes in nodes that are in tight  communities with super-constant individual influence  and nodes that are not in such a community. The influence $f(S^{\star}_{A})$ is trivially upper bounded by  $\sum_{i=1}^t |C_i|$. Next, there exists some constant $c$ s.t. 
$f(S^{\star}_{{B}}) \leq \sum_{a \in S^{\star}_{{B}}} f(\{a\}) \leq  c\cdot $
where the first inequality is by submodularity and the second since nodes in loose communities have constant individual influence by Lemma~\ref{lem:loose} and nodes in tight community without super-constant individual influence have constant influence by definition. We conclude that by submodularity, $f(S^{\star}) \leq f(S^{\star}_{{A}}) + f(S^{\star}_{{B}}) \leq \sum_{i=1}^t |C_i| + c\cdot k$.
\end{proof}

Next, we argue that the solution returned by the algorithm is a constant factor away from $\sum_{i=1}^{t} |C_i|$.
\begin{lemma} 
\label{lem:tight}
Let $a$  be the $i$th node in the ordering by first order maginal contribution after the pruning and $C_i$ be the $i$th largest tight community with super-constant individual influence and with at least one node in the optimal solution $S^{\star}$. Then, $f(\{a\}) \geq \epsilon |C_i|$ for some constant $\epsilon > 0$.
\end{lemma}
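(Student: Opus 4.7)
The plan is to combine a uniform lower bound $v(a) \geq \epsilon' |C|$ for every node $a$ in a tight community $C$ of super-constant size with Lemma~\ref{lem:overlaptight}, which guarantees that \textsc{Overlap} prunes all but one node per tight community. I then use a pigeonhole argument on the pruned ordering, and finally submodularity, to convert the $v$-bound into the claimed bound on $f(\{a\})$.

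\textbf{Step 1 (uniform lower bound on $v$).} For an arbitrary tight community $C$ with $|C| = \omega(1)$ and any $a \in C$, I would start from $v(a) = \E_{S \sim \D_{-a},G}[|\text{cc}_G(S \cup \{a\})| - |\text{cc}_G(S)|]$ and condition on $\{S \cap C = \emptyset\}$, which has probability at least some constant $\epsilon_0 > 0$ by the non-ubiquitous seed set assumption. On this event, the absence of inter-community edges forces $\text{cc}_G(a) \subseteq C$ to be disjoint from $\text{cc}_G(S)$, so the marginal contribution of $a$ equals $|\text{cc}_G(a)|$. By Lemma~\ref{lem:giant} together with standard Erdős--Rényi phase-transition theory, $a$ lies in the giant component of $G[C]$ with some constant probability $\epsilon_1 > 0$, and in that event $|\text{cc}_G(a)| \geq \epsilon_2 |C|$. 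The three constants depend only on the fixed tightness parameter and the non-ubiquitous seed set constant, so their product yields a uniform lower bound $v(a) \geq \epsilon'|C|$.

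\textbf{Step 2 (pigeonhole and conclusion).} Let $\tilde{C}_1, \ldots, \tilde{C}_T$ enumerate \emph{all} tight communities of super-constant size in decreasing order of $|\tilde{C}_j|$. By Lemma~\ref{lem:overlaptight}, the pruning retains exactly one representative $a_j$ from each $\tilde{C}_j$, and Step~1 gives $v(a_j) \geq \epsilon' |\tilde{C}_j| \geq \epsilon' |\tilde{C}_i|$ for every $j \leq i$. Thus the pruned ordering contains at least $i$ distinct nodes with $v$-value at least $\epsilon' |\tilde{C}_i|$, so its $i$th largest $v$-value is also at least $\epsilon' |\tilde{C}_i|$; equivalently, the $i$th node $a$ satisfies $v(a) \geq \epsilon' |\tilde{C}_i|$. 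Since $(C_j)_{j=1}^t$ is a sub-sequence of $(\tilde{C}_j)_{j=1}^T$, both sorted by size, we have $|\tilde{C}_i| \geq |C_i|$. Finally, submodularity of the independent cascade influence function gives $v(a) \leq f(\{a\})$, so chaining yields $f(\{a\}) \geq \epsilon' |C_i|$, proving the lemma with $\epsilon = \epsilon'$.

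The main technical subtlety I expect is hidden in Step~1: obtaining the constant $\epsilon'$ as an absolute constant, uniform across all tight super-constant communities $C$. This reduces to showing that in a supercritical Erdős--Rényi graph $G_{n,q}$ with $nq \geq 1+\epsilon$ for a fixed $\epsilon > 0$, both the giant-component density and the single-vertex survival probability are bounded below by constants depending only on $\epsilon$, not on $n$. This is a standard quantitative refinement of Lemma~\ref{lem:giant}, and the remaining bookkeeping over the joint randomness of $\D$, the stochastic block model, and the independent cascade process is routine once these are absorbed into the single random graph $G$ as in the setup of the section.
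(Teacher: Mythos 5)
Your proof follows essentially the same route as the paper's: the same giant-component lower bound $v(b) \geq \epsilon |C_j|$ obtained from the non-ubiquitous seed set assumption and Lemma~\ref{lem:giant}, the same pigeonhole argument over surviving representatives of the largest tight communities in the pruned ordering, and the same final step $f(\{a\}) \geq v(a)$. The only point to tighten is that ``exactly one representative per tight community survives'' does not follow from Lemma~\ref{lem:overlaptight} alone (which only gives ``at most one''); you also need that $\textsc{Overlap}(a,b,\alpha) = \text{False}$ when $a$ and $b$ lie in \emph{different} communities, which the paper justifies by reference to the argument in the proof of Theorem~\ref{thm:ptas}.
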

\begin{proof}
By definition of $C_i$, we have $|C_1| \geq \cdots \geq |C_i|$ that are all tight communities. Let $b$ be a node in $C_j$ for $j \in [i]$, $\mathds{1}_{\text{gc}(C)}$ be the indicator variable indicating if there is a giant component in community $C$, and $\text{gc}(C)$ be this giant component. We get 
\begin{align*}
v(b)   & \geq \Pr[\mathds{1}_{\text{gc}(C_j)}] \cdot \Pr_{S \sim \D_{-b}}[S \cap C_j = \emptyset] \cdot   \Pr[b \in \text{gc}(C_j) ] \cdot \E[|\text{gc}(C_j)| : b \in \text{gc}(C_j)] \\
& \geq (1-o(1))\cdot \epsilon_1 \cdot \epsilon_2 \cdot \epsilon_3 |C_j| \geq \epsilon |C_j|
\end{align*}
for some constants $\epsilon_1, \epsilon_2, \epsilon_3, \epsilon > 0$ by Lemma~\ref{lem:giant} and the non-ubiquitous assumption. Similarly as in Theorem~\ref{thm:ptas}, if $a$ and $b$ are in different communities, $\textsc{Overlap}(a,b,\alpha) = \text{False}$ for $\alpha \in (0,1]$. Thus, there is at least one node $b \in \cup_{j=1}^i C_j$ at position $i$ or after in the  ordering after the pruning, and $v(b) \geq  \epsilon |C_j|$ for some $j \in [i]$. By the ordering by first order marginal contributions and since node $a$ is in $i$th position, $v(a) \geq v(b)$, and we get that $f(\{a\}) \geq v(a)  \geq v(b) \geq \epsilon |C_j| \geq \epsilon |C_i|.$
\end{proof}


Next, we show that the algorithm never picks two nodes from a same tight community and defer the proof to Appendix~\ref{sec:appgeneral}.

\begin{restatable}{rLem}{lemnooverlapgeneral}
\label{lem:overlaptight}
If $a,b \in C$ and $C$ is a tight community, then $\textsc{Overlap}(a,b,\alpha) = \text{True}$ for $\alpha = o(1)$.
\end{restatable}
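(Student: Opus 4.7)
The plan is to compute $v(a)-v_b(a)$ exactly by exploiting the absence of inter-community edges, and then lower bound it by a constant fraction of $v(a)$ using the giant component of $G[C]$ guaranteed by Lemma~\ref{lem:giant}.

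Since $G$ has no inter-community edges, $\text{cc}_G(a)\subseteq C$ and is a single connected component. For any seed set $S$ with $a\notin S$, this forces
\[
f(S\cup\{a\})-f(S)\;=\;|\text{cc}_G(a)|\cdot\mathds{1}[S\cap\text{cc}_G(a)=\emptyset],
\]
i.e.\ the marginal is either the full component size or zero. Writing $q(A):=\prod_{v\in A}(1-p_v)$ and taking expectations over the product-distributed $S$ (conditioned on $a\notin S$ for $v$, and additionally on $b\in S$ for $v_b$), I obtain
\[
v(a)=\E_G\!\left[|\text{cc}_G(a)|\,(1-p_b)^{\mathds{1}[a\leftrightarrow_G b]}\, q(\text{cc}_G(a)\setminus\{a,b\})\right],
\]
\[
v_b(a)=\E_G\!\left[|\text{cc}_G(a)|\,\mathds{1}[a\not\leftrightarrow_G b]\, q(\text{cc}_G(a)\setminus\{a,b\})\right],
\]
since conditioning on $b\in S$ makes the marginal vanish whenever $b\in\text{cc}_G(a)$ and otherwise factors cleanly (as $b\notin\text{cc}_G(a)$). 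Subtracting yields the clean identity
\[
v(a)-v_b(a)\;=\;(1-p_b)\cdot\E_G\!\left[|\text{cc}_G(a)|\, q(\text{cc}_G(a)\setminus\{a,b\})\,\mathds{1}[a\leftrightarrow_G b]\right].
\]

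I would then lower-bound the right-hand side via the supercritical Erdős–Rényi structure of $G[C]=G_{|C|,p_C}$. By tightness of $C$ and Lemma~\ref{lem:giant}, together with standard concentration for the giant component, both $a$ and $b$ lie in a giant component of size at least $\zeta|C|$ with some constant probability $\zeta'>0$; on that event $a\leftrightarrow_G b$ and $|\text{cc}_G(a)|\geq\zeta|C|$. The non-ubiquitous assumption $\prod_{v\in C}(1-p_v)\geq\epsilon$ directly gives $q(\text{cc}_G(a)\setminus\{a,b\})\geq\prod_{v\in C}(1-p_v)\geq\epsilon$ and, since each factor of a product of terms in $[0,1]$ dominates the product, also $1-p_b\geq\epsilon$. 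Multiplying these constants yields $v(a)-v_b(a)\geq\Omega(|C|)$, and combined with the trivial bound $v(a)\leq \E_G[|\text{cc}_G(a)|]\leq|C|$ this gives $v_b(a)/v(a)\leq 1-\Omega(1)$. Hence $v_b(a)\leq(1-\alpha)v(a)$ for any $\alpha=o(1)$ and all sufficiently large $n$, and by definition $\textsc{Overlap}(a,b,\alpha)$ returns True.

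The hardest step I foresee is the joint-membership claim: Lemma~\ref{lem:giant} delivers existence of a constant-sized giant but not the constant lower bound on $\Pr[a\text{ and }b\text{ both in the giant}]$. Obtaining it requires invoking classical supercritical Erdős–Rényi facts --- each fixed vertex belongs to the giant with probability approaching the branching-process survival constant $\zeta$, and the two single-vertex events decouple as $|C|\to\infty$ --- or equivalently running a two-source branching-process exploration in the spirit of Lemma~\ref{lem:loose}. Since the lemma is only needed when $|C|=\omega(1)$ (tight communities with super-constant individual influence, cf.\ Lemma~\ref{lem:tight}), this asymptotic regime suffices.
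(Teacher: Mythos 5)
Your proof is correct, and it reaches the conclusion by a noticeably different decomposition than the paper's. The paper splits $v(a)$ according to whether $a\in\text{gc}(C)$ and then argues, by manipulating conditional expectations, that conditioning on $b\in S$ shrinks the giant-component term by a constant factor $(1+\epsilon)$, and that this term is within a constant of the whole, yielding $v(a)\geq(1+\epsilon\epsilon'/2)\,v_b(a)$. You instead exploit the structure of the marginal contribution directly --- $f_S(a)=|\text{cc}_G(a)|\cdot\mathds{1}[S\cap\text{cc}_G(a)=\emptyset]$ in the absence of inter-community edges --- to get closed forms for $v(a)$ and $v_b(a)$ under the product distribution, whose difference collapses to the exact identity $v(a)-v_b(a)=(1-p_b)\,\E_G\bigl[|\text{cc}_G(a)|\,q(\text{cc}_G(a)\setminus\{a,b\})\,\mathds{1}[a\leftrightarrow_G b]\bigr]$. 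Lower-bounding this by $\Omega(|C|)$ via the giant component and the non-ubiquitous assumption, and comparing against the trivial bound $v(a)\leq|C|$, gives the same multiplicative gap. What your route buys is transparency: the identity isolates exactly the event responsible for the overlap ($a$ and $b$ connected in $G$), and avoids the paper's somewhat delicate chain of conditional-expectation inequalities. Both arguments rest on the same external input: a constant lower bound on the probability that $a$ and $b$ are simultaneously in the giant component, which Lemma~\ref{lem:giant} as stated does not literally provide; you flag this honestly, and the paper's proof makes an equivalent unproven assertion ($\Pr[b\in\text{gc}(C),\,b\notin S]\geq\epsilon_1$), so this is not a gap specific to your approach --- it is the standard supercritical Erd\H{o}s--R\'enyi fact that each fixed vertex lies in the giant with probability converging to the survival probability of the associated branching process, and that the two single-vertex events asymptotically decouple. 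Your restriction to $|C|=\omega(1)$ matches how the lemma is actually invoked in Theorem~\ref{thm:main}.
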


We combine  the above lemmas to obtain the approximation guarantee of  \textsc{COPS}  (proof in Appendix~\ref{sec:appgeneral}).

\begin{restatable}{rThm}{thmmain}
\label{thm:main} With overlap allowed $\alpha = 1 / \poly(n)$,
\textsc{COPS} is a constant factor approximation algorithm for learning to influence from samples drawn from a bounded product distribution $\D$ in the setting with tight and loose communities and non-ubiquitous seed sets.
\end{restatable}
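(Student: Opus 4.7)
My plan is to combine the three quantitative facts already established in this section---the upper bound of Lemma~\ref{lem:ub}, the per-position lower bound of Lemma~\ref{lem:tight}, and the same-tight-community overlap detection of Lemma~\ref{lem:overlaptight}---with the trivial inequality $f(S)\ge|S|$ to bound $f(S^{\star})/f(S)$ by a constant. Let $S=\{a_1,\dots,a_k\}$ denote the output of \textsc{COPS} indexed by its pruned-ordering position, and let $C_1,\dots,C_t$ with $|C_1|\ge\cdots\ge|C_t|$ be the tight super-constant communities that contain an optimal node, as fixed in Lemma~\ref{lem:ub}.

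The heart of the argument is a matching lower bound $f(S)\ge\epsilon\sum_{i=1}^{t}|C_i|$. First I would argue that the prefix $a_1,\dots,a_t$ occupies $t$ distinct tight super-constant communities: Lemma~\ref{lem:overlaptight} forces \textsc{Overlap} to flag any pair inside the same tight community (so the pruning keeps at most one representative per tight community), while the cross-community non-overlap observed in the proof of Theorem~\ref{thm:ptas} guarantees that each tight community does contribute a representative. Lemma~\ref{lem:tight} then yields $f(\{a_i\})\ge\epsilon|C_i|$ for every $i\le t$; because Lemma~\ref{lem:loose} caps loose-community nodes at constant individual influence and $|C_i|=\omega(1)$, every such $a_i$ must itself lie in a tight super-constant community, making the $t$ communities pairwise distinct. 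Since there are no inter-community edges, the random components $\text{cc}_G(a_i)$ are disjoint and independent, giving
\[
f(S)\;\ge\;f(\{a_1,\dots,a_t\})\;=\;\sum_{i=1}^{t}f(\{a_i\})\;\ge\;\epsilon\sum_{i=1}^{t}|C_i|.
\]
Combined with $f(S)\ge|S|=k$ (every seed influences itself) and Lemma~\ref{lem:ub},
\[
f(S^{\star})\;\le\;\sum_{i=1}^{t}|C_i|+c\,k\;\le\;\Bigl(\tfrac{1}{\epsilon}+c\Bigr)\,f(S),
\]
which is the claimed constant-factor guarantee in expectation.

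To lift the argument from the true marginals $v,v_b$ to the estimates $\tilde v,\tilde v_b$ computed from samples, I would invoke the standard product-distribution concentration of Appendix~\ref{sec:appestimates}: $\poly(n)$ samples suffice to estimate every first- and second-order marginal contribution to additive error $1/\poly(n)$ with probability $1-1/\poly(n)$. The same-tight-community gap between $v(a)$ and $v_b(a)$ is $\omega(1)$ whereas the cross-community gap is $o(1)$, so choosing $\alpha=1/\poly(n)$ preserves every overlap classification used above and the pruned ordering from estimates coincides with the idealized one. A union bound over the $O(k)$ high-probability events---giant-component existence in each tight $C_i$ (Lemma~\ref{lem:giant}), the non-ubiquitous seed property, and accurate estimates---changes only the constants and leaves the constant-factor ratio intact.

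The step I expect to need the most care is the additivity $f(\{a_1,\dots,a_t\})=\sum_{i}f(\{a_i\})$: it relies on the pigeonhole inside Lemma~\ref{lem:tight} (each of $C_1,\dots,C_i$ contributes a surviving representative, so at least one of them sits at position $\ge i$, and the decreasing-$v$ order then forces the $i$th pruned pick to have $v\ge\epsilon|C_i|$), together with the fact---used in Theorem~\ref{thm:ptas} but not separately lemmatized---that \textsc{Overlap} never conflates nodes from different communities. Once those two structural guarantees are pinned down, the remaining manipulations are the routine bookkeeping of absorbing the $\epsilon$'s and the union-bound losses into a single constant.
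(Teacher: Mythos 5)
Your proposal follows essentially the same route as the paper's proof: it combines the upper bound of Lemma~\ref{lem:ub}, the per-position lower bound $f(\{a_i\})\ge\epsilon|C_i|$ of Lemma~\ref{lem:tight}, the distinctness of the tight communities hit by $a_1,\dots,a_t$ via Lemma~\ref{lem:overlaptight}, the additivity $f(S_t)=\sum_i f(\{a_i\})$ across disjoint communities, and the trivial bound $f(S)\ge k$ to absorb the $c\cdot k$ term. The argument is correct; your extra care about the estimation error and the union bound only makes explicit what the paper treats implicitly.
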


\section{Experiments}
\label{sec:experiments}

In this section, we compare the performance of \textsc{COPS} and three other algorithms on real and synthetic networks. We show that \textsc{COPS} performs well in practice, it outperforms the previous optimization from samples algorithm and gets closer to the solution obtained when given complete access to the influence function.

\paragraph{Experimental setup.}  The first synthetic network considered is the stochastic block model, \textsc{SBM 1}, where communities have random sizes with one community of  size significantly larger than the other communities. We maintained the same expected community size as $n$ varied. In the second stochastic block model, \textsc{SBM 2}, all communities have same expected size and the number of communities was fixed as $n$ varied. The third and fourth synthetic networks were an Erdős–Rényi (\textsc{ER}) random graph and the preferential attachment model (\textsc{PA}). Experiments were also conducted on two real networks publicly available (\cite{LK15}). The first is a subgraph of the Facebook social network with $n = 4k$ and $m  = 88k$. The second is a subgraph of the DBLP co-authorship network, which has ground truth communities as described in \cite{LK15}, where  nodes of degree at most $10$ were pruned to obtain $n = 54k$, $m = 361k$ and where  the $1.2k$ nodes with degree at least $50$ were considered as potential nodes in the solution. 

\paragraph{Benchmarks.} We considered three different benchmarks to compare the \textsc{COPS} algorithm against. The standard  \textsc{Greedy} algorithm in the value query model is an upper bound since it is the optimal efficient algorithm given value query access to the function and \textsc{COPS} is in the more restricted setting with only samples. \textsc{MargI} is the optimization from samples algorithm which picks the $k$ nodes with highest first order marginal contribution (\cite{BRS16}) and does not use second order marginal contributions. \textsc{Random} simply returns a random set. All the samples are drawn from the product distribution with marginal probability $k/n$, so that samples have expected size $k$. We further describe the parameters of each plot in Appendix~\ref{sec:appexperiments}.

\begin{figure}
     \centering
    \begin{subfigure}
    {\includegraphics[width=0.24\textwidth]{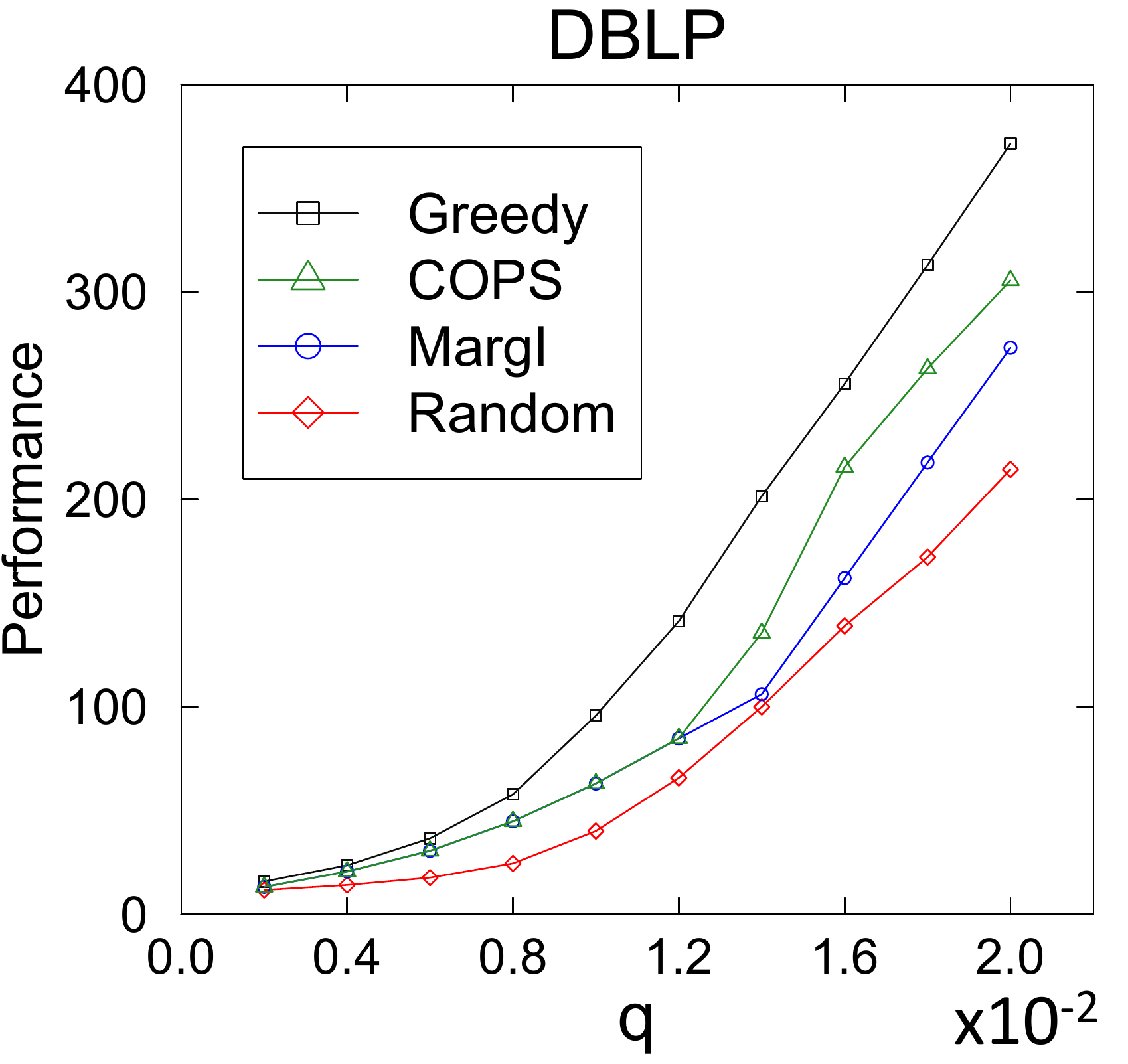}}    
    {\includegraphics[width=0.24\textwidth]{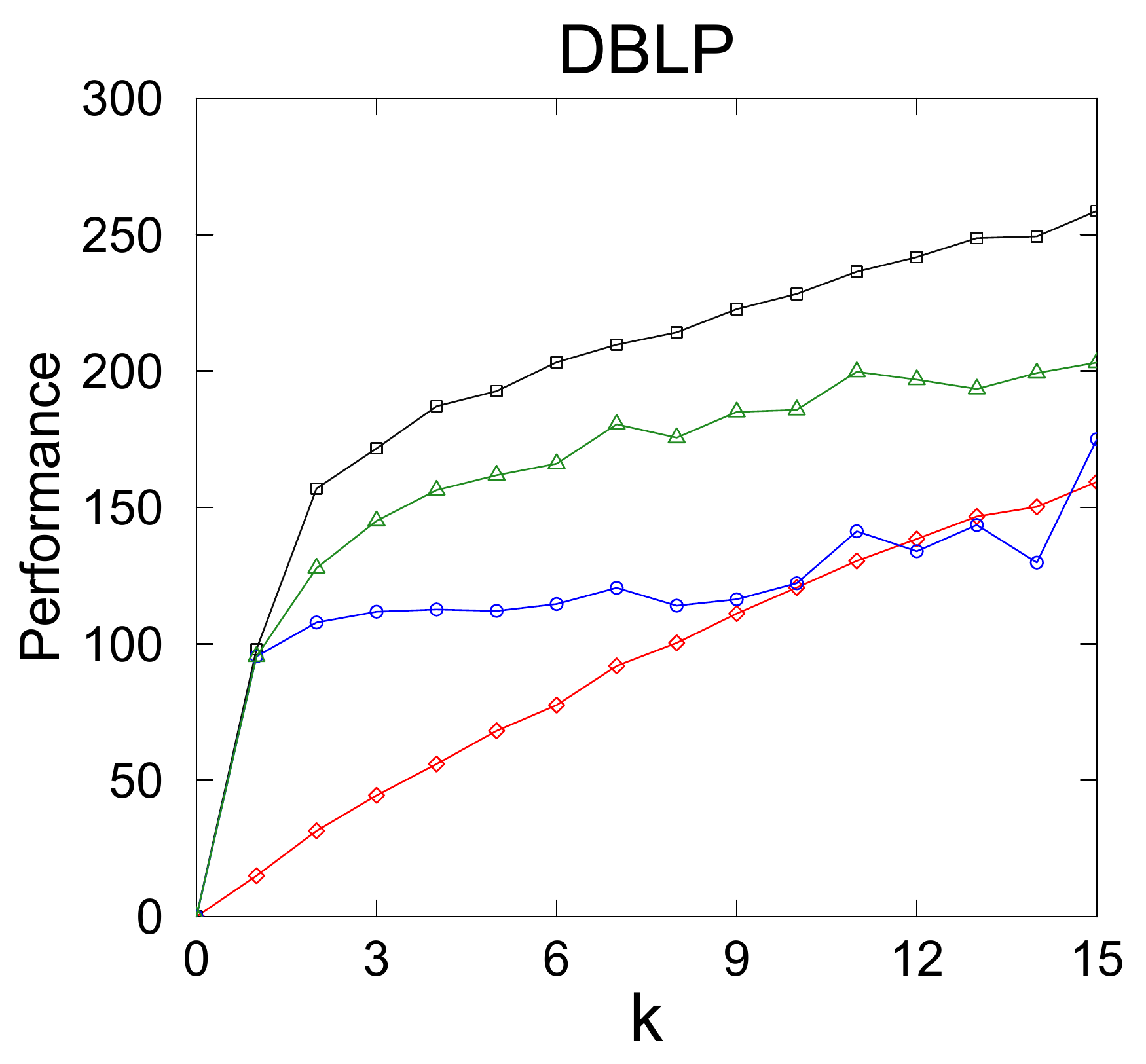}}
    {\includegraphics[width=0.24\textwidth]{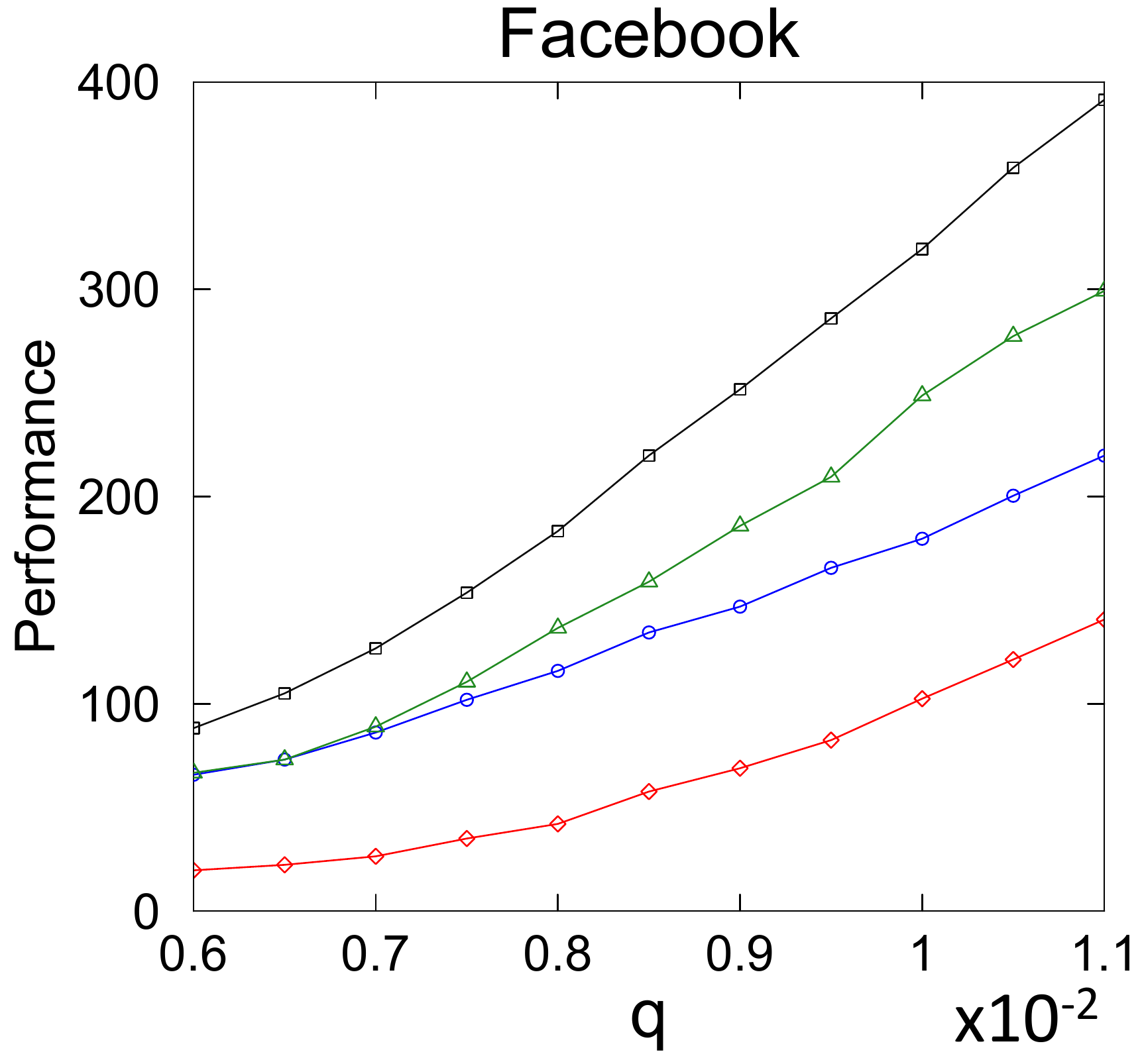}}  
       {\includegraphics[width=0.24\textwidth]{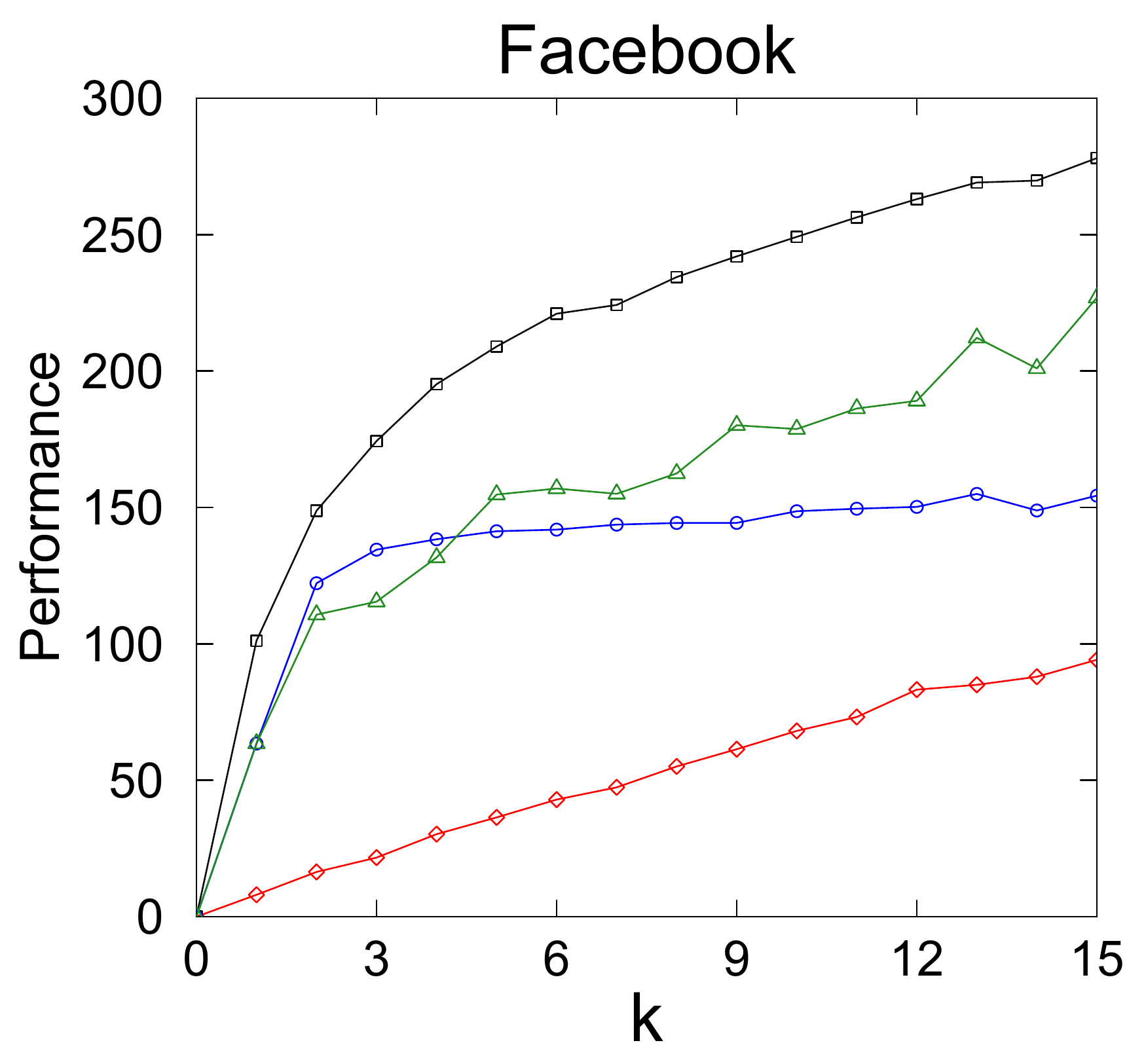}}     
    \end{subfigure}
    \begin{subfigure}
    {\includegraphics[width=0.24\textwidth]{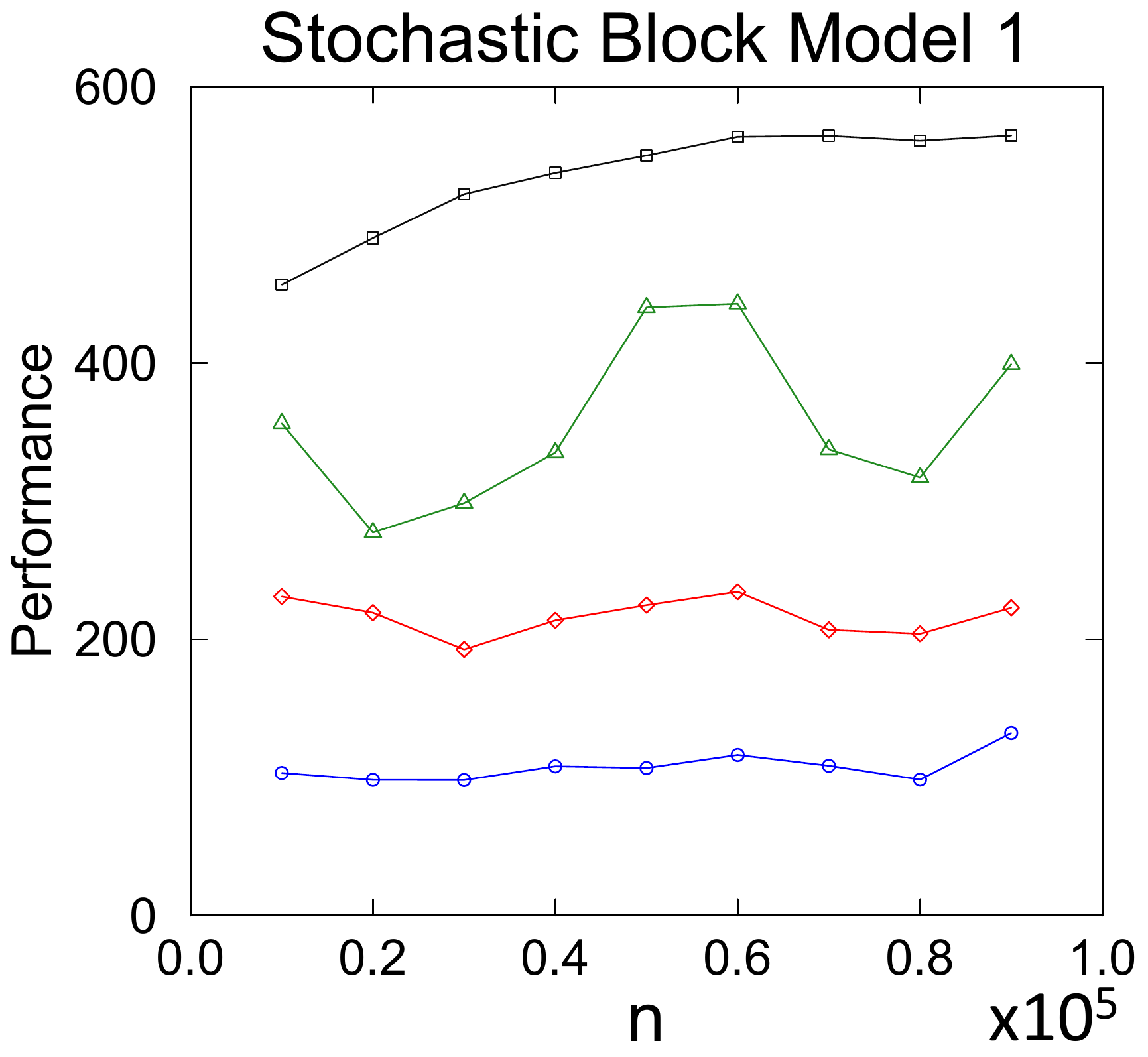}}
       { \includegraphics[width=0.24\textwidth]{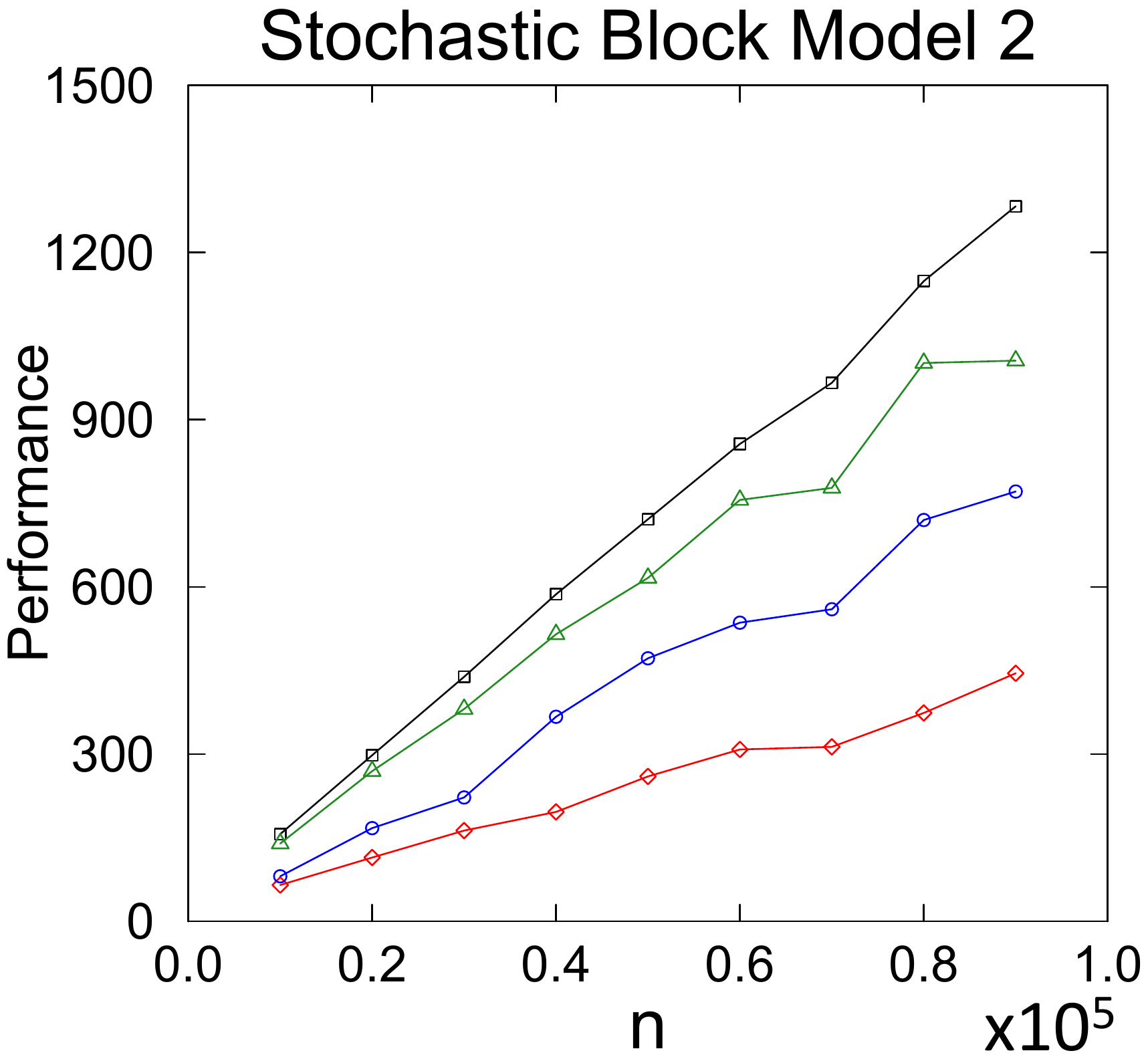}}{\includegraphics[width=0.24\textwidth]{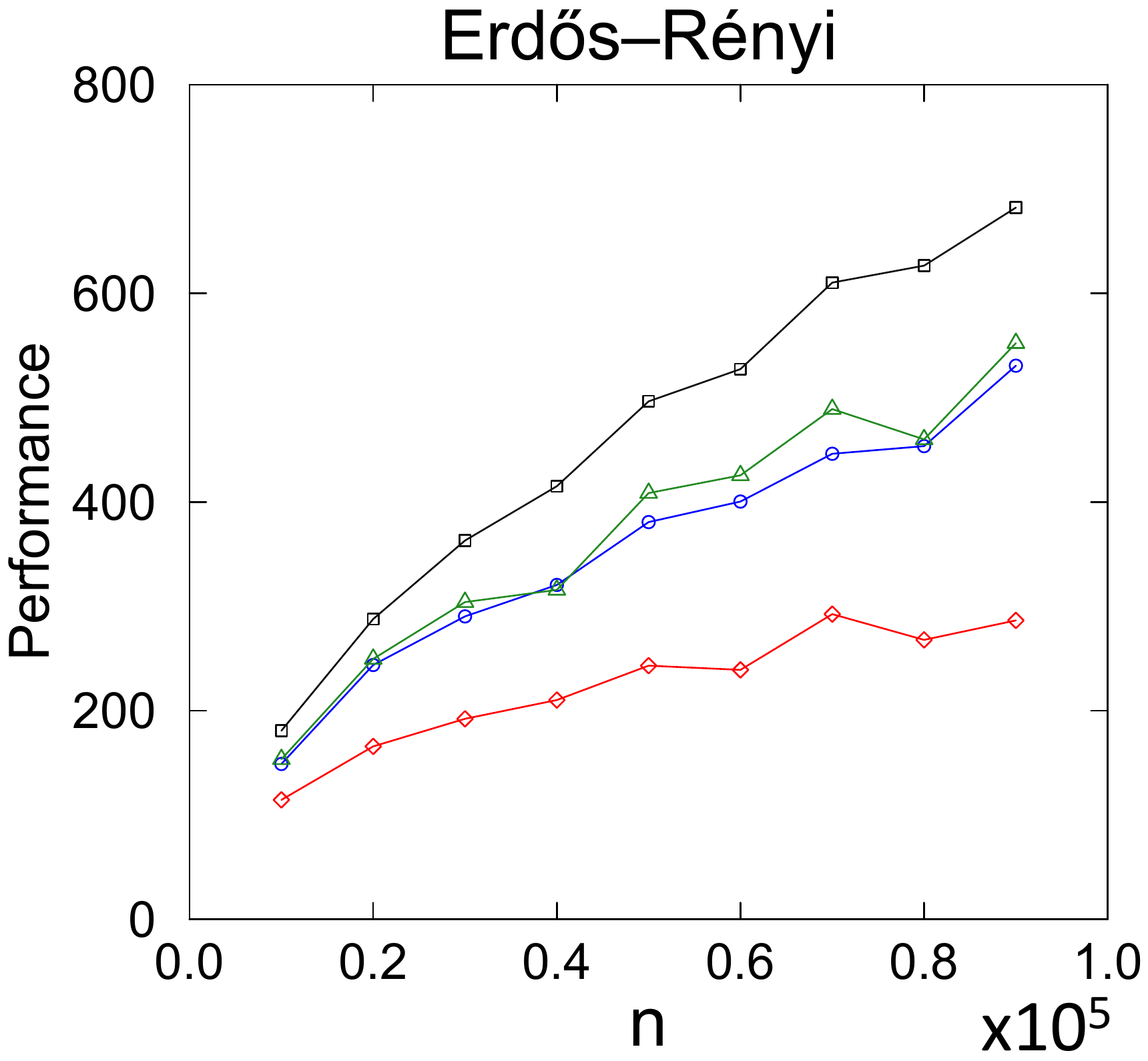}}
       {\includegraphics[width=0.24\textwidth]{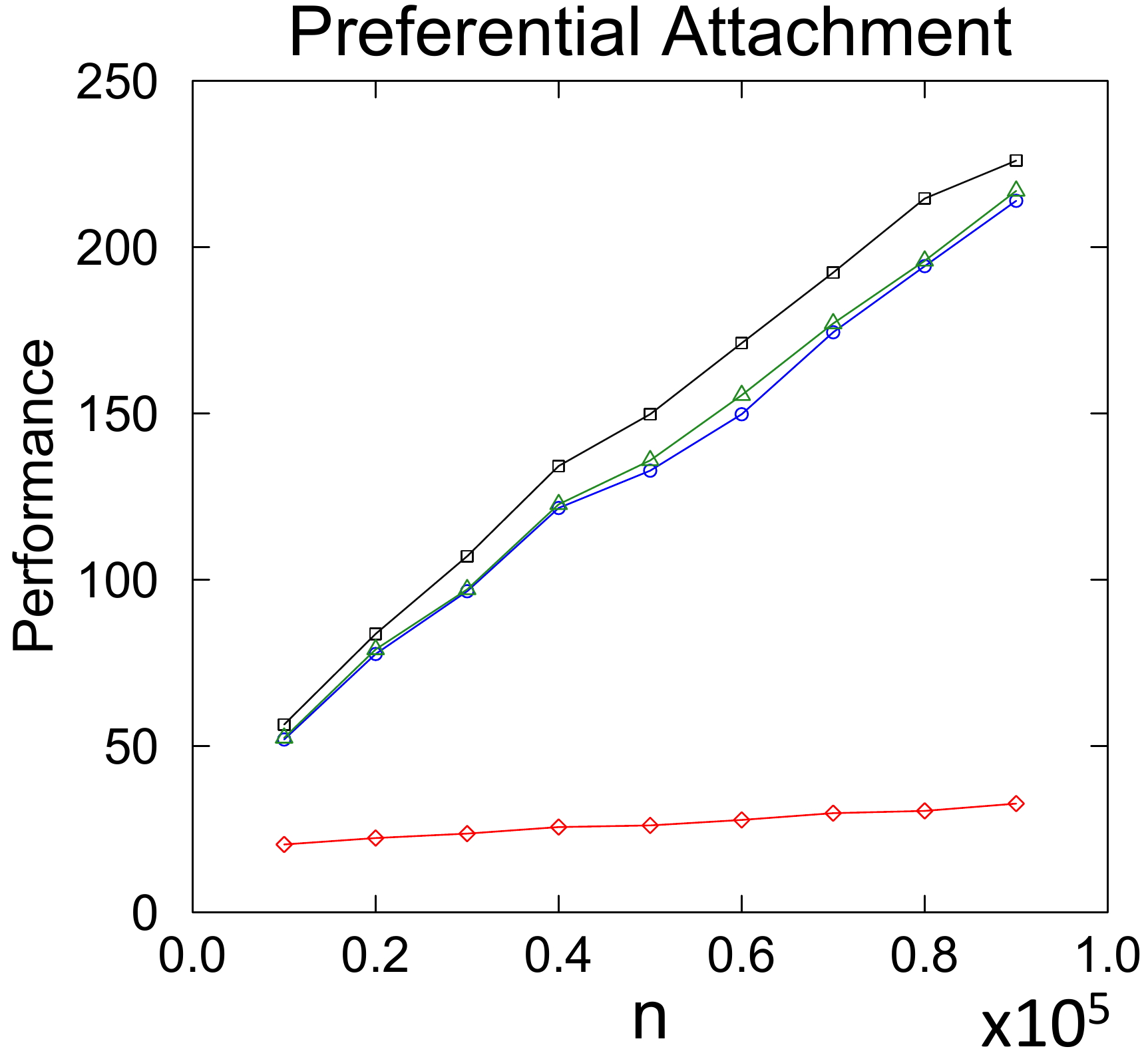}}
    \end{subfigure}
    \caption{Empirical performance of \textsc{COPS} against the \textsc{Greedy} upper bound, the previous optimization from samples algorithm \textsc{MargI} and a random set.}
\end{figure}

\paragraph{Empirical evaluation.}

\textsc{COPS} significantly outperforms the previous optimization from samples algorithm \textsc{MargI}, getting much closer to the \textsc{Greedy} upper bound. We observe that the more there is a community structure in the network, the better the performance of \textsc{COPS} is compared to \textsc{MargI}, e.g.,  \textsc{SBM} vs \textsc{ER} and \textsc{PA} (which do not have a community structure). When the edge weight $q := q^{\text{i.c.}}$ for the cascades is small, the function is near-linear and \textsc{MargI} performs well, whereas when it is large, there is a lot of overlap and \textsc{COPS}  performs better. The performance of \textsc{COPS}  as a function of the overlap allowed (experiment in Appendix~\ref{sec:appexperiments}) can be explained as follows: Its performance slowly increases as the the overlap allowed increases and \textsc{COPS} can pick from a larger collection of nodes  until it drops when it allows too much overlap and picks mostly very close nodes from a same community. For \textsc{SBM 1} with one larger community, \textsc{MargI} is trapped into only picking nodes from that larger community and performs even less well than \textsc{Random}. As $n$ increases, the number of nodes influenced increases roughly linearly for \textsc{SBM 2} when the number of communities is fixed since the number of nodes per community increases linearly, which is not the case for \textsc{SBM 1}.

\newpage
\bibliographystyle{alpha}
{\footnotesize \bibliography{inf_samples}}

\newcommand{\etalchar}[1]{$^{#1}$}
\begin{thebibliography}{DSGRZ13}

\bibitem[AA05]{AA05}
Eytan Adar and Lada~A. Adamic.
\newblock Tracking information epidemics in blogspace.
\newblock In {\em WI}, 2005.

\bibitem[ACKP13]{ACKP13}
Bruno~D. Abrahao, Flavio Chierichetti, Robert Kleinberg, and Alessandro
  Panconesi.
\newblock Trace complexity of network inference.
\newblock In {\em KDD}, 2013.

\bibitem[AS16]{angell2016don}
Rico Angell and Grant Schoenebeck.
\newblock Don't be greedy: Leveraging community structure to find high quality
  seed sets for influence maximization.
\newblock {\em arXiv preprint arXiv:1609.06520}, 2016.

\bibitem[BBCL14]{BBCL14}
Christian Borgs, Michael Brautbar, Jennifer~T. Chayes, and Brendan Lucier.
\newblock Maximizing social influence in nearly optimal time.
\newblock In {\em Proceedings of the Twenty-Fifth Annual {ACM-SIAM} Symposium
  on Discrete Algorithms, {SODA} 2014, Portland, Oregon, USA, January 5-7,
  2014}, pages 946--957, 2014.

\bibitem[BDF{\etalchar{+}}12]{BDF12}
Ashwinkumar Badanidiyuru, Shahar Dobzinski, Hu~Fu, Robert Kleinberg, Noam
  Nisan, and Tim Roughgarden.
\newblock Sketching valuation functions.
\newblock In {\em Proceedings of the twenty-third annual ACM-SIAM symposium on
  Discrete Algorithms}, pages 1025--1035. Society for Industrial and Applied
  Mathematics, 2012.

\bibitem[BHK]{blum2015foundations}
Avrim Blum, John Hopcroft, and Ravindran Kannan.
\newblock Foundations of data science.

\bibitem[BRS16]{BRS16}
Eric Balkanski, Aviad Rubinstein, and Yaron Singer.
\newblock The power of optimization from samples.
\newblock In {\em Advances in Neural Information Processing Systems 29: Annual
  Conference on Neural Information Processing Systems 2016, December 5-10,
  2016, Barcelona, Spain}, pages 4017--4025, 2016.

\bibitem[BRS17]{BRS17}
Eric Balkanski, Aviad Rubinstein, and Yaron Singer.
\newblock The limitations of optimization from samples.
\newblock In {\em STOC}, 2017.

\bibitem[BS17]{BS17}
Eric Balkanski and Yaron Singer.
\newblock The sample complexity of optimizing a convex function.
\newblock In {\em COLT}, 2017.

\bibitem[CAD{\etalchar{+}}14]{CADKL14}
Justin Cheng, Lada~A. Adamic, P.~Alex Dow, Jon~M. Kleinberg, and Jure Leskovec.
\newblock Can cascades be predicted?
\newblock In {\em WWW}, 2014.

\bibitem[CKL11]{CKL11}
Flavio Chierichetti, Jon~M. Kleinberg, and David Liben{-}Nowell.
\newblock Reconstructing patterns of information diffusion from incomplete
  observations.
\newblock In {\em NIPS}, 2011.

\bibitem[DBB{\etalchar{+}}14]{ABPNS14}
Abir De, Sourangshu Bhattacharya, Parantapa Bhattacharya, Niloy Ganguly, and
  Soumen Chakrabarti.
\newblock Learning a linear influence model from transient opinion dynamics.
\newblock In {\em CIKM}, 2014.

\bibitem[DGSS14]{DGSS14}
Hadi Daneshmand, Manuel Gomez{-}Rodriguez, Le~Song, and Bernhard
  Sch{\"{o}}lkopf.
\newblock Estimating diffusion network structures: Recovery conditions, sample
  complexity {\&} soft-thresholding algorithm.
\newblock In {\em ICML}, 2014.

\bibitem[DLBS14]{DLBS14}
Nan Du, Yingyu Liang, Maria{-}Florina Balcan, and Le~Song.
\newblock Influence function learning in information diffusion networks.
\newblock In {\em ICML}, 2014.

\bibitem[DR01]{DR01}
Pedro Domingos and Matthew Richardson.
\newblock Mining the network value of customers.
\newblock In {\em KDD}, 2001.

\bibitem[DSGRZ13]{NLMH13}
Nan Du, Le~Song, Manuel Gomez-Rodriguez, and Hongyuan Zha.
\newblock Scalable influence estimation in continuous-time diffusion networks.
\newblock In {\em NIPS}, 2013.

\bibitem[DSSY12]{DSSY12}
Nan Du, Le~Song, Alexander~J. Smola, and Ming Yuan.
\newblock Learning networks of heterogeneous influence.
\newblock In {\em NIPS}, 2012.

\bibitem[ER60]{erdos1960evolution}
Paul Erdos and Alfr{\'e}d R{\'e}nyi.
\newblock On the evolution of random graphs.
\newblock {\em Publ. Math. Inst. Hung. Acad. Sci}, 5(1):17--60, 1960.

\bibitem[FK14]{FK13}
Vitaly Feldman and Pravesh Kothari.
\newblock Learning coverage functions and private release of marginals.
\newblock In {\em COLT}, 2014.

\bibitem[GBL10]{GBL10}
Amit Goyal, Francesco Bonchi, and Laks~VS Lakshmanan.
\newblock Learning influence probabilities in social networks.
\newblock In {\em KDD}, 2010.

\bibitem[GBS11]{GBS11}
Manuel Gomez{-}Rodriguez, David Balduzzi, and Bernhard Sch{\"{o}}lkopf.
\newblock Uncovering the temporal dynamics of diffusion networks.
\newblock In {\em ICML}, 2011.

\bibitem[GLK12]{GLK12}
Manuel Gomez{-}Rodriguez, Jure Leskovec, and Andreas Krause.
\newblock Inferring networks of diffusion and influence.
\newblock {\em ACM Transactions on Knowledge Discovery from Data}, 5(4):21,
  2012.

\bibitem[HK16]{HK16}
Xinran He and David Kempe.
\newblock Robust influence maximization.
\newblock In {\em Proceedings of the 22nd {ACM} {SIGKDD} International
  Conference on Knowledge Discovery and Data Mining, San Francisco, CA, USA,
  August 13-17, 2016}, pages 885--894, 2016.

\bibitem[HO15]{HO15}
Jean Honorio and Luis Ortiz.
\newblock Learning the structure and parameters of large-population graphical
  games from behavioral data.
\newblock {\em Journal of Machine Learning Research}, 16:1157--1210, 2015.

\bibitem[HS15]{HS15}
Thibaut Horel and Yaron Singer.
\newblock Scalable methods for adaptively seeding a social network.
\newblock In {\em Proceedings of the 24th International Conference on World
  Wide Web, {WWW} 2015, Florence, Italy, May 18-22, 2015}, pages 441--451,
  2015.

\bibitem[HS16]{HS16}
Thibaut Horel and Yaron Singer.
\newblock Maximization of approximately submodular functions.
\newblock In {\em Advances in Neural Information Processing Systems 29: Annual
  Conference on Neural Information Processing Systems 2016, December 5-10,
  2016, Barcelona, Spain}, pages 3045--3053, 2016.

\bibitem[HS17]{HS17}
Avinatan Hassidim and Yaron Singer.
\newblock Submodular maximization under noise.
\newblock In {\em COLT}, 2017.

\bibitem[HXKL16]{HXL16}
Xinran He, Ke~Xu, David Kempe, and Yan Liu.
\newblock Learning influence functions from incomplete observations.
\newblock In {\em Advances in Neural Information Processing Systems 29: Annual
  Conference on Neural Information Processing Systems 2016, December 5-10,
  2016, Barcelona, Spain}, pages 2065--2073, 2016.

\bibitem[KKT03]{KKT03}
David Kempe, Jon~M. Kleinberg, and {\'E}va Tardos.
\newblock Maximizing the spread of influence through a social network.
\newblock In {\em KDD}, 2003.

\bibitem[KKT05]{KKT05}
David Kempe, Jon~M. Kleinberg, and {\'{E}}va Tardos.
\newblock Influential nodes in a diffusion model for social networks.
\newblock In {\em Automata, Languages and Programming, 32nd International
  Colloquium, {ICALP} 2005, Lisbon, Portugal, July 11-15, 2005, Proceedings},
  pages 1127--1138, 2005.

\bibitem[LK03]{LK03}
David Liben{-}Nowell and Jon~M. Kleinberg.
\newblock The link prediction problem for social networks.
\newblock In {\em CIKM}, 2003.

\bibitem[LK15]{LK15}
Jure Leskovec and Andrej Krevl.
\newblock Snap datasets, stanford large network dataset collection.
\newblock 2015.

\bibitem[LMF{\etalchar{+}}07]{LMFGH07}
Jure Leskovec, Mary McGlohon, Christos Faloutsos, Natalie~S. Glance, and
  Matthew Hurst.
\newblock Patterns of cascading behavior in large blog graphs.
\newblock In {\em SDM}, 2007.

\bibitem[MR07]{MR07}
Elchanan Mossel and S{\'e}bastien Roch.
\newblock On the submodularity of influence in social networks.
\newblock In {\em STOC}, 2007.

\bibitem[NPS15]{NPS15}
Harikrishna Narasimhan, David~C. Parkes, and Yaron Singer.
\newblock Learnability of influence in networks.
\newblock In {\em Advances in Neural Information Processing Systems 28: Annual
  Conference on Neural Information Processing Systems 2015, December 7-12,
  2015, Montreal, Quebec, Canada}, pages 3186--3194, 2015.

\bibitem[NS12]{Netrapalli12}
Praneeth Netrapalli and Sujay Sanghavi.
\newblock Learning the graph of epidemic cascades.
\newblock In {\em SIGMETRICS/Performance}, 2012.

\bibitem[RD02]{RD02}
Matthew Richardson and Pedro Domingos.
\newblock Mining knowledge-sharing sites for viral marketing.
\newblock In {\em KDD}, 2002.

\bibitem[SS13]{SS13}
Lior Seeman and Yaron Singer.
\newblock Adaptive seeding in social networks.
\newblock In {\em 54th Annual {IEEE} Symposium on Foundations of Computer
  Science, {FOCS} 2013, 26-29 October, 2013, Berkeley, CA, {USA}}, pages
  459--468, 2013.

\end{thebibliography}

\appendix

\newpage 
\section{Concentration Bounds}

We state the Chernoff bound and Hoeffding's inequality, the two standard concentration bounds used to bound the error of the estimates of the marginal contributions.

\begin{lemma}[Chernoff Bound]
\label{l:chernoff}
Let $X_1, \dots, X_n$ be independent indicator random variables, $X = \sum_{i=1}^n X_i$ and $\mu = \E[X]$. For $0 < \delta < 1$,
$$\Pr[|X - \mu| \geq \delta \mu] \leq 2 e^{- \mu \delta^2 / 3}.$$
\end{lemma}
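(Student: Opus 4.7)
The plan is to derive both the upper and lower tail bounds separately via the standard exponential moment (Chernoff) method, then combine them with a union bound to obtain the two-sided statement. I would first set up Markov's inequality applied to $e^{tX}$: for any $t > 0$,
\[
\Pr[X \geq (1+\delta)\mu] \;=\; \Pr[e^{tX} \geq e^{t(1+\delta)\mu}] \;\leq\; \frac{\E[e^{tX}]}{e^{t(1+\delta)\mu}},
\]
and analogously use $t < 0$ for the lower tail $\Pr[X \leq (1-\delta)\mu]$.

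Next I would compute the moment generating function. Since the $X_i$ are independent indicators with $p_i := \E[X_i]$, independence gives $\E[e^{tX}] = \prod_{i=1}^n \E[e^{tX_i}]$, and for each factor
\[
\E[e^{tX_i}] \;=\; 1 + p_i(e^t - 1) \;\leq\; \exp\!\bigl(p_i(e^t - 1)\bigr),
\]
using $1+x \leq e^x$. Multiplying across $i$ and using $\sum_i p_i = \mu$ yields $\E[e^{tX}] \leq \exp(\mu(e^t - 1))$.

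Plugging this into the Markov step and optimizing over $t$ is the main calculation. For the upper tail, choosing $t = \ln(1+\delta) > 0$ gives the classical bound
\[
\Pr[X \geq (1+\delta)\mu] \;\leq\; \left(\frac{e^\delta}{(1+\delta)^{1+\delta}}\right)^{\!\mu}.
\]
The hard part is showing $\frac{e^\delta}{(1+\delta)^{1+\delta}} \leq e^{-\delta^2/3}$ for $0 < \delta < 1$; this reduces to the inequality $\delta - (1+\delta)\ln(1+\delta) \leq -\delta^2/3$, which I would verify by expanding $\ln(1+\delta)$ as a Taylor series (or equivalently showing the function $g(\delta) := (1+\delta)\ln(1+\delta) - \delta - \delta^2/3$ is nonnegative by checking $g(0)=0$ and $g'(\delta) \geq 0$ on $[0,1)$). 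The symmetric choice $t = \ln(1-\delta) < 0$ for the lower tail yields $\Pr[X \leq (1-\delta)\mu] \leq e^{-\mu\delta^2/2}$, whose exponent is already tighter than $-\mu\delta^2/3$.

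Finally, applying a union bound over the two tails,
\[
\Pr[|X - \mu| \geq \delta\mu] \;\leq\; e^{-\mu\delta^2/3} + e^{-\mu\delta^2/2} \;\leq\; 2e^{-\mu\delta^2/3},
\]
yielding the stated inequality. The one genuine obstacle is the Taylor/monotonicity estimate bounding $\frac{e^\delta}{(1+\delta)^{1+\delta}}$ by $e^{-\delta^2/3}$ on $(0,1)$; everything else is routine manipulation of the MGF and Markov's inequality.
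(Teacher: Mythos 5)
Your proposal is the standard exponential-moment (MGF) proof of the multiplicative Chernoff bound, and it is correct; the paper simply states this lemma as a known concentration result without proof, so there is no authorial argument to diverge from. The only step that needs a little care is verifying $g(\delta) := (1+\delta)\ln(1+\delta) - \delta - \delta^2/3 \geq 0$ on $(0,1)$: note that $g'(\delta) = \ln(1+\delta) - 2\delta/3$ is itself not obviously nonnegative, but $g'(0)=0$, $g''(\delta) = \tfrac{1}{1+\delta} - \tfrac{2}{3}$ changes sign once at $\delta = 1/2$, and $g'(1) = \ln 2 - 2/3 > 0$, so $g'$ stays nonnegative on $[0,1]$ and the bound follows.
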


\begin{lemma}[Hoeffding's inequality]
\label{l:hoeffding}
Let $X_1, \dots, X_n$ be independent random variables with values in $[0,b]$. Let $X = \frac{1}{m}\sum_{i=1}^m X_i$ and $\mu = \E[X]$. Then for every $0 < \epsilon < 1$,
$$\Pr\left[|X - \E[X]| \geq \epsilon\right] \leq 2 e^{- 2 m \epsilon^2 /  b^2}.$$
\end{lemma}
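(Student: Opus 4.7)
The plan is to establish Hoeffding's inequality via the standard Chernoff/exponential-moment method combined with Hoeffding's lemma. The overall strategy is: first bound the one-sided deviation $\Pr[X-\mu \geq \epsilon]$ by applying Markov's inequality to the random variable $e^{\lambda(X-\mu)}$ for an arbitrary $\lambda > 0$, then use independence to factorize the moment generating function of the sum, then bound each factor via Hoeffding's lemma, then optimize over $\lambda$, and finally apply the same argument to $-X_i$ together with a union bound to obtain the two-sided statement.

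More concretely, I would proceed as follows. Let $Y_i = (X_i - \E[X_i])/m$, so that $X - \mu = \sum_{i=1}^{m} Y_i$ and each $Y_i$ is a centered random variable taking values in an interval of length $b/m$. For any $\lambda > 0$, the exponential Markov inequality gives
\[
\Pr[X - \mu \geq \epsilon] = \Pr\!\left[e^{\lambda(X-\mu)} \geq e^{\lambda \epsilon}\right] \leq e^{-\lambda \epsilon}\,\E\!\left[e^{\lambda(X-\mu)}\right].
\]
By independence of the $X_i$'s, $\E[e^{\lambda(X-\mu)}] = \prod_{i=1}^{m} \E[e^{\lambda Y_i}]$.

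The key auxiliary fact is \emph{Hoeffding's lemma}: if $Y$ is a mean-zero random variable supported in an interval of length $L$, then $\E[e^{\lambda Y}] \leq e^{\lambda^2 L^2 / 8}$. I would either quote this lemma or sketch it briefly by convexity of $y \mapsto e^{\lambda y}$ on $[a,b]$ combined with an analysis of the function $\varphi(t) = \log \E[e^{tY}]$ via its Taylor expansion (showing $\varphi''(t) \leq L^2/4$). Applied to each $Y_i$ with $L = b/m$, this yields $\E[e^{\lambda Y_i}] \leq e^{\lambda^2 b^2/(8m^2)}$, and therefore
\[
\Pr[X - \mu \geq \epsilon] \leq \exp\!\left(-\lambda \epsilon + \frac{\lambda^2 b^2}{8m}\right).
\]
Minimizing the exponent in $\lambda$ gives $\lambda^\star = 4m\epsilon / b^2$, producing the one-sided bound $\Pr[X - \mu \geq \epsilon] \leq e^{-2m\epsilon^2/b^2}$.

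Finally, I would apply the identical argument to the random variables $-X_i$ (which are supported in $[-b, 0]$, so the same length-$b$ interval hypothesis applies after centering) to obtain the matching lower-tail bound $\Pr[X - \mu \leq -\epsilon] \leq e^{-2m\epsilon^2/b^2}$, and conclude by the union bound. The only mildly technical step is Hoeffding's lemma itself; if one does not wish to reprove it, one can simply cite it, in which case the remainder of the argument is a short computation. No step is a substantive obstacle here, since this is a textbook concentration result; the main care needed is in tracking the factor of $1/m$ and ensuring the range bound $b/m$ is used correctly when optimizing $\lambda$.
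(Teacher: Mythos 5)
Your argument is the standard Chernoff--Hoeffding proof and it is correct: the exponent optimization with $L=b/m$ per summand indeed yields $\lambda^{\star}=4m\epsilon/b^{2}$ and the bound $e^{-2m\epsilon^{2}/b^{2}}$ per tail. The paper itself offers no proof of this lemma --- it is quoted as a standard concentration bound --- so there is nothing to compare against; your derivation is a perfectly adequate self-contained justification.
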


\section{Estimates of Expected Marginal Contributions}
\label{sec:appestimates}

 Recall that  $S \sim \D_a$ is the set obtained from  $S \sim \D$ conditioned on $ a \in S$. Similarly, we defined  $S \sim \D_{-a}$ by conditioning on $S$ not containing $a$ and we also extended this definition for multiple nodes, such as with $\D_{a,-b}$.  Similarly as with $\D_a$ and $\D_{-a}$,
  let $\S_a$ and $\S_{-a}$ be the collections of samples containing and not containing $a$ respectively. We also extend this notation for multiple nodes, e.g., $\S_{a,-b}$.

The first order marginal contributions $v(a)$ are estimated with the difference $\tilde{v}(a)$ between the average value of samples containing $a$ and the average value of samples not containing  $a$, i.e. 

$$\tilde{v}(a) := \left( \frac{1}{|\S_{a}|}\sum_{S \in \S_{a}} f(S) - \frac{1}{|\S_{-a}|}\sum_{S \in \S_{-a}} |f(S)|\right).$$
Similarly, the second order marginal contribution $v_b(a)$ are estimated with
$$\tilde{v}_b(a) := \left( \frac{1}{|\S_{a,b}|}\sum_{S \in \S_{a,b}} f(S) - \frac{1}{|\S_{-a,b}|}\sum_{S \in \S_{-a,b}} |f(S)|\right).$$

We show that these estimates $\tilde{v}(a)$ and $\tilde{v}_{b}(a)$  are arbitrarily good  when the distribution $\D$ is a bounded product distribution.

\begin{restatable}{rLem}{lconcentration}
\label{lem:concentration} The estimates $\tilde{v}(a)$ and $\tilde{v}_{b}(a)$ are arbitrarily close to the first and second order marginal contributions $v(a)$ and $v_{b}(a)$ of a node $a$ given samples from a bounded product distribution $\D$.\footnote{A product distribution $S \sim \D$ is bounded if the marginal probabilities are bounded away from $0$ and $1$, i.e., $1 / \poly(n) \leq \Pr[a \in S] \leq 1 - 1/\poly(n)$ for all $a \in V$.}  For all $a,b \in N $ and given $\poly(n, \nicefrac{1}{\delta}, \nicefrac{1}{\epsilon})$ i.i.d. samples from $\D$,   
\begin{align*}
|\tilde{v}(a) - v(a)| \leq  \epsilon  \ \ \ \text{ and } \ \ \ |\tilde{v}_{b}(a) - v_{b}(a)| \leq \epsilon 
\end{align*} 
   with probability at least $1 - \delta$ for any $\delta> 0$.
\end{restatable}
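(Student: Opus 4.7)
The plan is to show that each estimate is an unbiased empirical average of i.i.d.\ bounded random variables, and then combine Hoeffding's inequality with a Chernoff-based lower bound on the number of samples that land in each conditional bucket. The four quantities that need to be controlled are the averages over $\S_a,\S_{-a},\S_{a,b},\S_{-a,b}$.

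First I would verify unbiasedness. Because $\D$ is a product distribution, conditioning on a coordinate of $S$ being $1$ or $0$ leaves the remaining coordinates independent with their original marginals, so $\E_{S \sim \D_a}[f(S)] - \E_{S \sim \D_{-a}}[f(S)] = v(a)$ and analogously $\E_{S \sim \D_{a,b}}[f(S)] - \E_{S \sim \D_{-a,b}}[f(S)] = v_b(a)$. Thus $\tilde v(a)$ and $\tilde v_b(a)$ are differences of sample means of i.i.d.\ draws from the correct conditional distributions.

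Next I would bound the bucket sizes. Since $\D$ is bounded, $p_a \in [1/\poly(n), 1-1/\poly(n)]$ for every $a$, and by independence $\Pr_{S \sim \D}[a \in S, b \in S] = p_a p_b \geq 1/\poly(n)$, with analogous lower bounds for the other three sign patterns on $\{a,b\}$. Applying the Chernoff bound (Lemma~\ref{l:chernoff}) to the indicator $\mathbf{1}[a \in S_i]$ (resp.\ the product indicators for pairs) across $m$ i.i.d.\ samples, each of $|\S_a|,|\S_{-a}|,|\S_{a,b}|,|\S_{-a,b}|$ is at least $m/\poly(n)$ with probability at least $1-\delta/2$, provided $m \geq \poly(n,\log(1/\delta))$. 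Then I would apply Hoeffding's inequality (Lemma~\ref{l:hoeffding}) to each of these averages: since $0 \leq f(S) \leq n$ (the function counts influenced nodes), conditional on $|\S_a| \geq m/\poly(n)$ the empirical mean over $\S_a$ deviates from $\E_{S \sim \D_a}[f(S)]$ by more than $\epsilon/2$ with probability at most $2\exp(-2(\epsilon/2)^2 |\S_a|/n^2)$, which is $\leq \delta/8$ once $m \geq \poly(n,1/\epsilon,\log(1/\delta))$. The same bound applies to the $\S_{-a}$ average, and a triangle inequality gives $|\tilde v(a)-v(a)| \leq \epsilon$; the argument for $\tilde v_b(a)$ is identical with $\S_{a,b},\S_{-a,b}$ replacing $\S_a,\S_{-a}$. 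A final union bound over the $O(n^2)$ pairs $(a,b)$ and the constantly many bucket events per pair absorbs a $\log n$ factor into the polynomial.

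The main obstacle is the lower bound on $|\S_{a,b}|$ and $|\S_{-a,b}|$: this is exactly where the boundedness hypothesis on $\D$ is indispensable, because if some marginal $p_a$ were allowed to approach $0$ or $1$ super-polynomially fast, one of the four conditional buckets would be empty with overwhelming probability and no polynomial-size sample could estimate the corresponding conditional expectation. Everything else is a routine application of Hoeffding and Chernoff with the crude but sufficient range bound $f \leq n$.
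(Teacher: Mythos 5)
Your proposal is correct and follows essentially the same route as the paper's proof: unbiasedness of the difference of conditional sample means via the product-distribution structure, a Chernoff bound to guarantee each conditional bucket $\S_a,\S_{-a},\S_{a,b},\S_{-a,b}$ has $m/\poly(n)$ samples, and Hoeffding's inequality plus a triangle inequality to control each empirical mean within $\epsilon/2$. The explicit union bound over all $O(n^2)$ pairs is a reasonable addition that the paper leaves implicit.
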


\begin{proof}

We give the analysis for the second order marginal contribution. The proof follows identically for the first order marginal contribution by treating $b$ as null. Note that since $\D$ is a product distribution,
$$ \EU[S \sim \D:b \in S,a \not \in S][f(S \cup a)] - \EU[S \sim \D:b \in S, a \not \in S][f(S)] = \EU[S \sim \D:b \in S, a \in S][f(S)] - \EU[S \sim \D:b \in S, a \not \in S][f(S)].$$

 Since marginal probabilities of the product distributions are assumed to be bounded from below and above by $1/\poly(n)$ and $1 - 1/\poly(n)$ respectively, $|\S_{a,b}| = m / \poly(n)$ and $|\S_{-a,b}| = m / \poly(n)$ for all $a$ by Chernoff bound. In addition, $\max_S f(S)$ is  assumed to be bounded by $\poly(n)$. So by Hoeffding's inequality, 
$$\Pr\left(\left| \frac{1}{|\S_{a,b}|}\sum_{S \in \S_{a,b}} f(S) - \EU[S \sim \D|b \in S, a \in S][f(S)]\right| \geq  \epsilon/2\right) \leq 2 e^{-\frac{  m \epsilon ^2}{\poly(n)}},$$
for $0 < \epsilon < $ 2
$$\Pr\left(\left|\frac{1}{|\S_{-a,b}|}\sum_{S \in \S_{-a,b}} |f(S)| - \EU[S \sim \D|b \in S,a \not \in S][f(S)]\right| \geq  \epsilon / 2\right) \leq 2 e^{-\frac{  m \epsilon^2}{\poly(n)}}$$
for $0 < \epsilon <2 $.Thus,
\begin{align*}
\Pr&\left(\left|\EU[S \sim \D | b \in S,a \not \in S][f(S \cup a) - f(S)] - \left( \frac{1}{|\S_{a,b}|}\sum_{S \in \S_{a,b}} f(S) - \frac{1}{|\S_{-a,b}|}\sum_{S \in \S_{-a,b}} |f(S)|\right)\right| \geq \epsilon\right) \\
& \leq 4 e^{-\frac{  m \epsilon^2}{\poly(n)}}
\end{align*}

for $0 < \epsilon < 2$.

\end{proof}



\section{Erdős–Rényi Random Graphs}
\label{sec:algER}

A $G_{n,p}$ Erdős–Rényi graph is a random graph over $n$ vertices where every edge realizes with probability $p$. Note that the graph obtained by the two step process which consists of first the stochastic block model and then the independent cascade model is a union of $G_{|C|,p_{C}}$ for each community $C$. The following are seminal results from Erdős–Rényi characterizes phase transitions for $G_{n,p}$ graphs.

\lemERC*
\begin{proof} Assume $p_C = c \log |C| / |C|$ for $c > 1$.
 From Theorem 4.6 in \cite{blum2015foundations} which presents the result from \cite{erdos1960evolution}, the expected number of isolated vertices $a$ in $G(|C|,p)$ is $$\E[i] = |C|^{1-c} + o(1)$$ and from Theorem 4.15 in \cite{blum2015foundations}, the expected number of components of size between $2$ and $|C|/2$ is $O(|C|^{1-2c})$. Thus the expected number of components of size at most $n/2$ is $O(n^{1-c})$ and the probability that the graph is connected is $1 - O(|C|^{1-c})$. Finally, since $c \geq 3$ for dense communities, the probability that the graph for community $C$ is connected is $1 - O(|C|^{-2})$.
\end{proof}

\lemgiant*

\section{Missing Analysis from Section~\ref{sec:ptas}}
\label{sec:appptas}

\thmptas*
\begin{proof}
First, we claim that a node $a \in C$ is not removed from the ordering if there is no other node from $C$ before $a$.For $b \not \in C$, we have $$v(a) = \E_{S \sim \D_{-a}}[f_S(a)] = \E_{S \sim \D_{-a}}[f_S(a) : b \in S] = \E_{S \sim \D_{-a,b}}[f_S(a)] = v_b(a)$$ where the second equality is since $a$ and $b$ are in different communities and since $\D$ is a product distribution. Thus, $\textsc{Overlap}(a,b,\alpha) = \text{False}$ for any $\alpha \in (0,1]$. 

 Next, recall that $v(a) \leq |C|$ for all $a \in C$. Thus, by Lemmas~\ref{lem:margcontrib} and \ref{lem:overlap}. \textsc{COPS} returns a set that  contains one node from $k$ different communities that have sizes that are at most a factor $1 - o(1)$ away from the sizes of the $k$ largest communities. Since the $k$ largest communities are connected with high probability, the optimal solution contains one node from each of the $k$ largest communities. Thus, we obtain a $1- o(1)$ approximation.
\end{proof}

\section{Missing Analysis from Section~\ref{sec:general}}
\label{sec:appgeneral}

\lemloose*
\begin{proof}
Fix a node $a \in C$.  We consider a Galton-Watson branching process starting at individual $a$ where  the number of offsprings of an individual is $X = \text{Binomial}(|C|-1, p_C)$. We show that the expected total size $s$ of this branching process is $1  / (1- p_C \cdot (|C|-1))$ and that this expected size $s$ upper bounds $f(\{a\})$.

We first argue that $s \geq f(\{a\})$. The expected number of nodes influenced by
$a$ can be counted via a breadth first search (BFS) of community $C$ starting at $a$. The number of edges leaving a node in this BFS is  $\text{Binomial}(|C|-1, p_C)$, which is  exactly the number of offsprings of an individual in the branching process. Since the nodes explored in the BFS are only the nodes not yet explored, the  number of nodes explored by BFS is upper bounded by the branching process and we get $\E[s] \geq f(\{a\})$.

Next, let $\mu = p_C \cdot (|C|-1) < 1$ be the expected number of offsprings of an individual in the branching process. Let $s_i$ be the expected number of individuals at generation $i$ of the branching process. We show by induction that $\E[s_i] = \mu^i$. The base case is trivial for $i = 1$. Next, for $i = 2$,
$$\E[s_i] = \sum_{j = 0}^{\infty} \Pr[s_{i-1} = j] \cdot \E[s_i | s_{i-1} = j] =  \sum_{j = 0}^{\infty} \Pr[s_{i-1} = j] \cdot j \cdot \mu = \mu \E[s_{i-1}] = \mu^i$$
where the last inequality is by the inductive hypothesis. Thus, $\E[s] = \sum_{i=0}^\infty \mu^i = 1 /(1-\mu)$ since $\mu < 1$. Finally, since $C$ is tight, $\mu \leq 1- \epsilon$ for some constant $\epsilon > 0$ and  $f(\{a\}) \leq \E[s] \leq 1/ \epsilon$.
\end{proof}

\lemnooverlapgeneral*
\begin{proof}
Let $a,b \in C$ s.t. $C$ is a tight community. The  marginal contribution of node $a_i$ can be decomposed into the whether $a \in \text{gc}(C)$:
\begin{align*}
v(a) & = \Pr_G[a \in \text{gc}(C)] \cdot \EU[S \sim \D_{-a}][f_S(a) : a \in \text{gc}(C)] + \Pr_G[a \not \in \text{gc}(C)] \cdot \EU[S \sim \D_{-a}][f_S(a) : a \not \in \text{gc}(C)]
\end{align*}
 Since $\D$ is a product distribution, 
\begin{align*}
\EU[S \sim \D_{-a}][f_S(a) : a \in \text{gc}(C)] & = (\Pr[b \not \in \text{gc}(C) ] + \Pr_{S \sim \D}[b  \in \text{gc}(C), b \not \in S]   )  \EU[S \sim \D_{-a,-b}][f_S(a) : a \in \text{gc}(C)]  \\
& \geq  (1 + \epsilon) \Pr[b \not \in \text{gc}(C) ] \EU[S \sim \D_{-a,-b}][f_S(a) : a \in \text{gc}(C)] \\ 
& =  (1 + \epsilon)\EU[S \sim \D_{-a,b}][f_S(a) : a \in \text{gc}(C)] 
\end{align*}
for some constant $\epsilon > 0$ since $\Pr_{S \sim \D_{-a}}[b \in \text{gc}(C), b \not \in S]  \geq \epsilon_1$ for some constant $\epsilon_1 > 0$. Since,
$$ \Pr_G[a \in \text{gc}(C)] \EU[S \sim \D_{-a,b}][f_S(a) : a \in \text{gc}(C)]  \geq \epsilon' |C|\geq \epsilon' \Pr_G[a \not \in \text{gc}(C)] \cdot \EU[S \sim \D_{-a}][f_S(a) : a \not \in \text{gc}(C)] $$
for some constant $\epsilon' > 0$, we get 
\begin{align*}
v(a) & \geq (1 + \epsilon)\Pr_G[a \in \text{gc}(C)]  \cdot\EU[S \sim \D_{-a,b}][f_S(a) : a \in \text{gc}(C)]  \\
& \ \ \ \ \ \  +\Pr_G[a \not \in \text{gc}(C)] \cdot \EU[S \sim \D_{-a}][f_S(a) : a \not \in \text{gc}(C)] \\ 
& \geq (1 + \epsilon\epsilon'/2) \Big( \Pr_G[a \in \text{gc}(C)] \EU[S \sim \D_{-a,b}][f_S(a) : a \in \text{gc}(C)]   \\
& \ \ \ \ \ \  +\Pr_G[a \not \in \text{gc}(C)] \cdot \EU[S \sim \D_{-a,b}][f_S(a) : a \not \in \text{gc}(C)]\Big)\\
& = (1 + \epsilon\epsilon'/2) v_b(a)
\end{align*} 
Thus, $\textsc{Overlap}(a,b, \alpha) = \text{True}$ for $\alpha = o(1)$.
\end{proof}

\thmmain*
\begin{proof}
First, observe that $f(S) \geq k$ since nodes trivially influence themselves.
Let $a_i$ be the node picked by the algorithm that is in the $i$th position of the ordering after the pruning and assume $i \leq t$. By Lemma~\ref{lem:tight}, $f(\{a_i\}) \geq \epsilon |C_i|$  where $C_i$ is the $i$th largest tight community with super-constant individual influence and with at least one node in $S^{\star}$. Thus $a_i$, $i \in [t]$ is in a tight community, otherwise it would have constant influence by Lemma~\ref{lem:loose}, which is a contradiction with $f(\{a_i\}) <\geq \epsilon |C_i|$. Since $a_i$, $i \in [t]$ is in a tight community, by Lemma~\ref{lem:overlaptight}, we obtain that $a_1, \ldots, a_i$ are all in different communities. We denote by $S_t$ the subset of the solution returned by \textsc{COPS} and obtain
 We obtain $f(S^{\star})  \leq \sum_{i=1}^{t} |C_i| + c \cdot k 
 \leq \sum_{i=1}^{t} \frac{1}{\epsilon} \cdot f(\{a_i\})+ c\cdot f(S)
 = \frac{1}{\epsilon}\cdot f(S_t)+ c \cdot f(S) \leq c_1 \cdot f(S)$ for some constant $\epsilon,c,c_1$ by Lemmas~\ref{lem:ub}, \ref{lem:tight}, and since $a_i, a_j$ are in different communities for $i,j \leq t$.
\end{proof}

\begin{figure}
\centering
\includegraphics[width=0.4\textwidth]{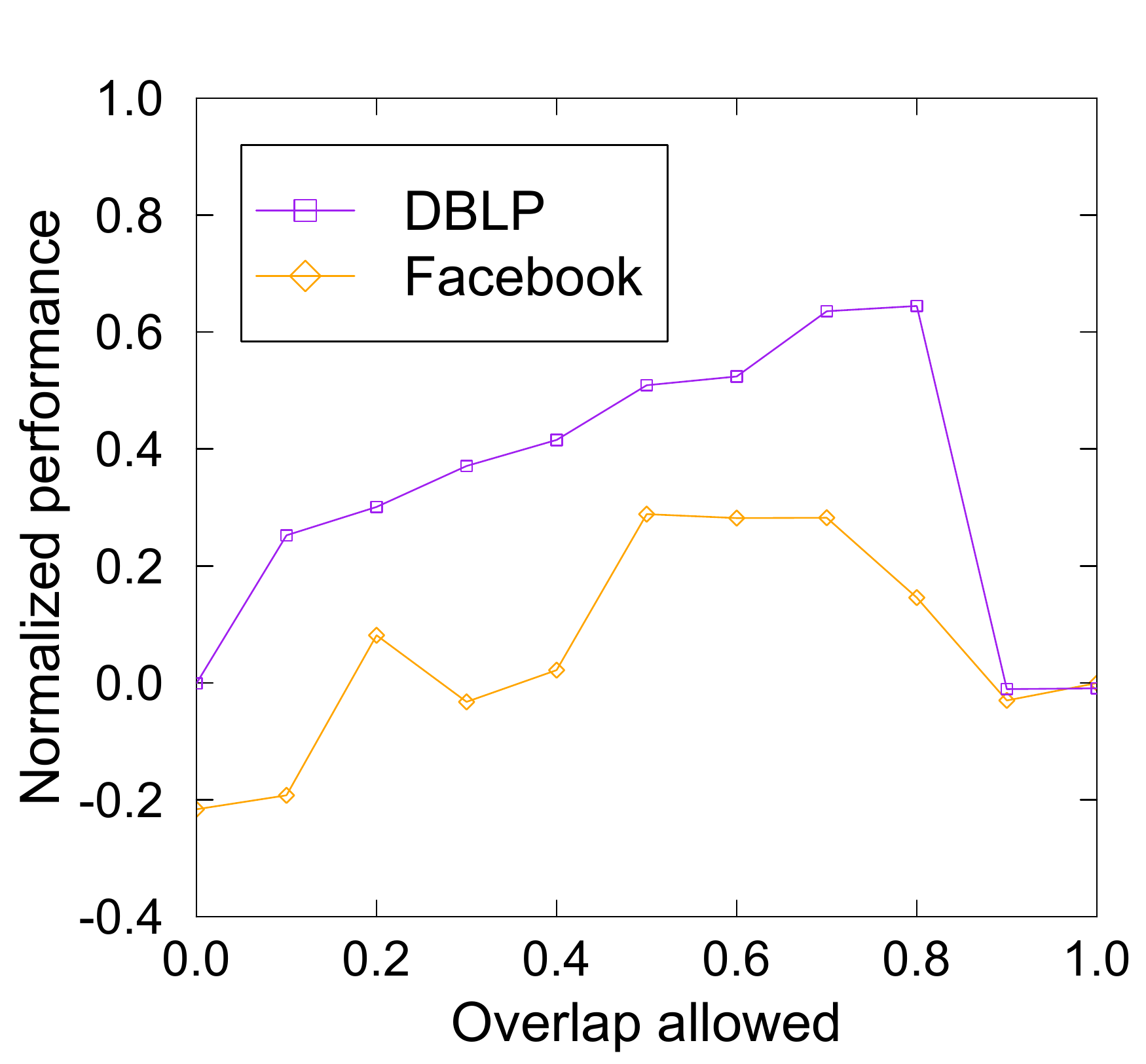}   
    \caption{Performance of \textsc{COPS} as a function of the overlap $\alpha$ allowed. The performance is normalized so that the performance of \textsc{Greedy} and \textsc{MargI} corresponds to value $1$ and $0$ respectively}
    \label{fig:overlap}
\end{figure}

\section{Additional Experimental Setup and Analysis}
\label{sec:appexperiments}

\paragraph{Additional description for the experimental setup.} Each  point in a plot corresponds to the average performance of the algorithms over $10$ trials. The default values for $k$ is $k = 10$. For the experiments on synthetic data, the default overlap allowed is $\alpha = 0.5$, for the Facebook experiments $\alpha = 0.4$ and for the DBLP experiments $\alpha = 0.2$.  The default edge weights are chosen so that in the random realization of $G$ the average degree of the nodes is approximately $1$.

\paragraph{Additional analysis.} As discussed in Section~\ref{sec:experiments},  the performance of \textsc{COPS}  as a function of the overlap allowed (Figure~\ref{fig:overlap}) can be explained as follows: Its performance slowly increases as the the overlap allowed increases and \textsc{COPS} can pick from a larger collection of nodes  until it drops when it allows too much overlap and picks mostly very close nodes from the same largest community.

\end{document}